\providecommand{\customgenericname}{}
\newcommand{\newcustomtheorem}[2]{%
  \newenvironment{#1}[1]
  {%
   \renewcommand\customgenericname{#2}%
   \renewcommand\theinnercustomgeneric{##1}%
   \innercustomgeneric
  }
  {\endinnercustomgeneric}
}
\newtheorem{theorem}{Theorem} \newtheorem{lemma}{Lemma}
\newtheorem{claim}{Claim} \newtheorem{definition}{Definition}
\newcommand{\ket}[1]{|#1\rangle}  
\newcommand{\codepar}[1]{\ensuremath{[\![#1]\!]}}
\crefname{equation}{Eq.\!}{Eqs.\!}
\crefname{figure}{Fig.\!}{Figs.\!}
\begin{document}

\title{Flag fault-tolerant error correction, measurement, and quantum computation \\ for cyclic CSS codes}

\author{Theerapat Tansuwannont}
\email{ttansuwannont@uwaterloo.ca}
\affiliation{
    Institute for Quantum Computing and Department of Physics and Astronomy,
    University of Waterloo,
    Waterloo, Ontario, N2L 3G1, Canada
    }
\author{Christopher Chamberland}
\email{christopher.chamberland@ibm.com}
\affiliation{
   IBM T. J. Watson Research Center, Yorktown Heights, NY, 10598, United States
    }
\affiliation{
    Institute for Quantum Computing and Department of Physics and Astronomy,
    University of Waterloo,
    Waterloo, Ontario, N2L 3G1, Canada
    }
    
    \author{Debbie Leung}
\email{wcleung@uwaterloo.ca}
\affiliation{
   Institute for Quantum Computing and Department of Combinatorics and Optimization,
    University of Waterloo,
    Waterloo, Ontario, N2L 3G1, Canada
    }

\begin{abstract}
Flag qubits have recently been proposed in syndrome extraction circuits to detect high-weight errors arising from fewer faults. The use of flag qubits allows the construction of fault-tolerant protocols with the fewest number of ancillas known to-date. In this work, we prove some critical properties of CSS codes constructed from classical cyclic codes that enable the construction of a flag fault-tolerant error correction scheme. We then develop fault-tolerant protocols as well as a family of circuits for flag fault-tolerant error correction and operator measurement, requiring only four ancilla qubits and applicable to cyclic CSS codes of distance 3. The measurement protocol can be further used for logical Clifford gate implementation via quantum gate teleportation. We also provide examples of cyclic CSS codes with large encoding rates.
\end{abstract}

\pacs{03.67.Pp}

\maketitle

\section{Introduction}
\label{sec:Intro}%
Fault-tolerant quantum computation is an essential component in building a large scale quantum computer.  It enables \emph{arbitrarily} low logical error rates, despite all operations (including those used to perform error correction) may be noisy, as long as the noise strength is below a \emph{constant but sufficiently small} threshold value \cite{Shor96,AB97,Preskill98,KLZ96}. The value of the threshold depends on several factors, including the underlying quantum error correcting code, the design of the fault-tolerant gadgets and the error correction protocol, the speed of quantum measurements and classical processing of the error syndromes, and the underlying physical noise. Currently, the surface code appears to be a strong candidate for fault-tolerant quantum computation given its high threshold value as well as the geometric locality of the gates used in the syndrome extraction circuits \cite{DKLP02,FMMC12,TS14,FowlerAutotune}.

Meanwhile, low logical error rates requires large qubit and gate overheads \cite{PR12,CJL16b,YoderOverhead}. Therefore, a fault-tolerant protocol that uses fewer ancilla qubits (and thus lower overheads) is easier to realize experimentally. A fault-tolerant protocol limits the number of physical errors in each code block arising from a single fault. Recently, Chao and Reichardt \cite{CR17v1,CR17v2} showed that fault-tolerant error correction (FTEC) as well as fault-tolerant quantum computation can be achieved using only two extra ancilla qubits for perfect distance-3 codes. The idea is to use \emph{flag qubits} to detect high-weight errors arising from a single fault. Furthermore, Reichardt showed that stabilizer measurements with flag qubits for the Steane code can be parallelized to reduce the circuit depths \cite{ReichardtF;agColorCode}. In \cite{CB17}, FTEC protocol using very few flag qubits were developed for several families of stabilizer codes of arbitrary distance. For example, color codes with a hexagonal lattice and arbitrary distance require only four ancilla qubits in the FTEC scheme. The protocol in \cite{CB17} can be used with LDPC (low density parity check) codes to achieve constant overhead \cite{LDPC13,Gottesman13LDPC,TZLDPC14}. Flag qubits were further used to fault-tolerantly prepare magic states with very low overhead compared to previous distillation schemes when Clifford gates are noisy \cite{CC18}. Lastly in \cite{FlagGKPStatePrep}, it was shown how flag qubits can be used to fault-tolerantly prepare GKP states.

The idea behind flag-FTEC \cite{CR17v1} is that high-weight errors arising from a single fault have special structure. Despite their high weight, these errors can be alerted using few flag qubits and distinguished by subsequent syndrome measurements. However, there is no general theory what codes admit the flag technique. An interesting family of quantum codes consists of CSS codes constructed from classical cyclic codes. These codes have cyclic structures, each stabilizer generator is either $X$-type or $Z$-type, and some of these codes have high encoding rates. These properties make them a good choice for fault-tolerant quantum computation (see \cref{sec:ExCyclicCSS}).

In this work, we generalize the flag technique to the family of cyclic CSS codes by exploiting the cyclic structure in the high-weight errors arising from a single fault. We build on the previous flag-FTEC schemes and obtain a new flag-FTEC scheme applicable to cyclic CSS codes of distance 3. In particular, we construct circuits for measuring the error syndromes using flag qubits which require only four ancilla qubits (see \cref{fig:GeneralWflagFig}). The circuit uses a particular ordering of the CNOT gates which is independent of the underlying stabilizer code. Our work further expands the code families where flag-FTEC schemes can be used with very few ancilla qubits. Moreover, the number of required ancilla qubits is independent of the weights of the stabilizers being measured. Finally, we provide a flag fault-tolerant (flag-FT) operator measurement protocol for cyclic CSS codes, which can be further used for Clifford gate implementation and other applications.

The paper is organized as follows: In \cref{sec:ReviewFlag} we review the basic properties of flag error correction and CSS codes. Key definitions which are used in several parts of the paper are introduced. We define the notion of distinguishable errors and consecutive error sets which are key components of our flag-FTEC scheme. We conclude the section by stating \cref{Lemma:Lem1}, an important building block for constructing our flag-FTEC scheme. In \cref{sec:CyclicCodesDistinguishability} we review basic properties of classical cyclic codes, then state \cref{Lemma:Cyclic_gen,Lemma:Lem3}. Using the lemmas, we state and prove \cref{theo:MainTheorem} which is the final ingredient required to construct our flag-FTEC scheme for CSS codes constructed from classical cyclic codes. In \cref{sec:Protocol} we describe the syndrome extraction circuit used in our flag-FTEC protocol and proceed by describing the protocol in detail as well as explaining how it satisfies the fault-tolerance criteria. In \cref{sec:MeasProtocol} we provide a flag-FT measurement protocol for Pauli operators, and its possible applications are discussed in \cref{sec:Applications}. Examples of distance-3 cyclic CSS codes are given in \cref{sec:ExCyclicCSS}. Lastly, we discuss our results and directions for future work in \cref{sec:Discussion}.

\section{Flag error correction with CSS codes}
\label{sec:ReviewFlag}%

CSS codes form one of the most studied families of quantum codes since they have nice properties for fault-tolerant quantum computation. It has been shown recently that the technique of flag-FTEC can be applied to several families of codes \cite{CR17v1,CB17}, but it remains open whether the techniques can also be applied to general CSS codes. In this section, we will analyze the idea behind flag techniques and provide the conditions which make CSS codes suitable for flag-FTEC in \cref{Lemma:Lem1}. This lemma will be a main ingredient for our theorem for cyclic CSS codes in the next section.

We first define CSS codes. They are constructed from classical binary linear codes \cite{MS77} as follows: An $[n,k,d]$ classical linear code $C$ encodes $k$ bits in $n$ and has distance $d$ (the minimum Hamming weight of the codewords). It corrects up to $t = \lfloor (d-1)/2 \rfloor$ errors. The code is defined by the parity check matrix $H$ which consists of $n-k$ independent rows that are orthogonal to every codeword. The dual code $C^\perp$ of $C$ consists of codewords that are orthogonal to all codewords in $C$. Note that $C^\perp$ is generated by $H$, that is, each codeword in $C^\perp$ is a linear combination of rows of $H$.  

A quantum \codepar{n,k,d} stabilizer code \cite{Gottesman96,Gottesman97} encodes $k$ logical qubits in $n$ physical qubits. It is the simultaneous $+1$ eigenspace of $n-k$ commuting, independent, Pauli operators. These Pauli operators multiplicatively generate a group called the \emph{stabilizer group} for the code, and the Pauli operators are called the \emph{stabilizer generators}. The code has distance $d$ (see \cite{Gottesman96}) and it can correct errors acting on up to $t = \lfloor (d-1)/2 \rfloor$ qubits. Let $I,X,Y,Z$ denote the single-qubit Pauli operators. A Pauli operator $P$ on $n$ qubits, given by $P=\bigotimes_{i=1}^n X^{x_i} Z^{z_i}$ up to a phase, has a \emph{symplectic representation} $\sigma(P)$ which is the $2n$-bit string $\sigma(P)=(x_1, \cdots, x_n| z_1, \cdots, z_n)$. The symplectic representation of a stabilizer code is an $(n-k) \times 2n$ binary matrix where the $i^\text{th}$ row is the symplectic representation of the $i^\text{th}$ generator. The CSS codes first proposed in \cite{CS96,Steane96b} can be defined in the stabilizer formalism as follows:

\begin{definition}{\underline{\smash{CSS code}} \cite{CS96,Steane96b,Gottesman97}}
	
	An \codepar{n,k,d} stabilizer code is a \emph{CSS code} if the generators can be chosen such that the code has symplectic representation
	\begin{align}
	\begin{pmatrix}
	A &|& 0 \\
	0 &|& B 
	\end{pmatrix},
	\end{align}
	\label{Def:CSS}%
	where $A$ is an $r_x\times n$ matrix and B is an $r_z\times n$ matrix for some $r_x$ and $r_z$ with $r_x + r_z = n-k$. $A$ and $B$ are called $X$ and $Z$ stabilizer matrices.
\end{definition}
In other words, a CSS code is a stabilizer code whose generators can be chosen to be either tensor products of $I$ and $X$ or of $I$ and $Z$. The generators of $X$-type and $Z$-type are called $X$ and $Z$ stabilizers, respectively. With this choice of generators, the $X$ errors and $Z$ errors can be detected separately.

\begin{theorem}{\underline{\smash{CSS code construction}} \cite{Gottesman97}}
	
	Let $C_x$ be an $[n,k_x,d_x]$ classical linear code with parity check matrix $H_x$ and $C_z$ be an $[n,k_z,d_z]$ classical linear code with parity check matrix $H_z$. Suppose that $H_x^T H_z = 0$, or equivalently, $C_x^\perp \subseteq C_z$. Then, the following binary matrix
	\begin{align}
	\begin{pmatrix}
	H_x &|& 0  \\
	0 &|& H_z
	\end{pmatrix},
	\end{align}
	\label{Theorem:CSS}%
	is the symplectic representation of an \codepar{n,k,d} stabilizer code $C$ with $k=k_x + k_z - n$ and $d\geq\min\{d_x,d_z\}$. 
\end{theorem}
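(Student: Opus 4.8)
The plan is to check, one at a time, the three things needed to conclude that the displayed matrix is the symplectic representation of an \codepar{n,k,d} stabilizer code of the stated type: (i)~its rows are the symplectic representations of a commuting, independent set of Pauli operators generating a group not containing $-I$; (ii)~the code this group stabilizes encodes $k=k_x+k_z-n$ qubits; and (iii)~its distance is at least $\min\{d_x,d_z\}$. Throughout I will use that the rows of $H_x$ span $C_x^\perp$, that a vector lies in $C_z$ iff it is orthogonal to every row of $H_z$ (and symmetrically with $x\leftrightarrow z$), and the standard fact that Pauli operators with symplectic representations $(a|b)$ and $(a'|b')$ commute iff $a\cdot b'+b\cdot a'=0$ over $\FF_2$.

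\emph{Commutation, independence, and $-I$.} Any two rows of the form $(v|0)$, and any two of the form $(0|w)$, have vanishing symplectic product, so the $X$-type generators commute among themselves and likewise the $Z$-type ones. For a row $(h|0)$ coming from $H_x$ and a row $(0|h')$ coming from $H_z$, the symplectic product equals $h\cdot h'$, which vanishes precisely because the hypothesis $C_x^\perp\subseteq C_z$ says every row $h$ of $H_x$ is a codeword of $C_z$ and hence orthogonal to every row $h'$ of $H_z$; so all generators commute. Since $H_x$ and $H_z$ have full row rank, and no nonzero $\FF_2$-combination of $X$-type rows (which have zero $Z$-part) can equal a combination of $Z$-type rows (which have zero $X$-part), the $2n-k_x-k_z$ rows are $\FF_2$-linearly independent. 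A product of generators has symplectic representation $(v|w)$ with $v$ in the row space of $H_x$ and $w$ in the row space of $H_z$; this equals $(0|0)$ only for $v=w=0$, in which case a short phase bookkeeping shows the operator is $+I$, so $-I$ is not in the group. Hence the group stabilizes a code on $n$ qubits with $k=n-(2n-k_x-k_z)=k_x+k_z-n$; this is nonnegative since $\dim C_x^\perp=n-k_x$ and $C_x^\perp\subseteq C_z$ give $n-k_x\le k_z$.

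\emph{Distance.} The distance $d$ is the minimum weight of a Pauli operator that commutes with every stabilizer but is not itself a stabilizer. Let $P$ be such an operator, with symplectic representation $(a|b)$. Commuting with every $Z$-type generator $(0|h')$ forces $a\cdot h'=0$ for all rows $h'$ of $H_z$, i.e.\ $a\in C_z$; symmetrically, commuting with every $X$-type generator forces $b\in C_x$. If $a\neq 0$ then the weight of $P$ is at least the Hamming weight of $a$, which is at least $d_z$; if $b\neq 0$ then likewise the weight of $P$ is at least $d_x$; and $a=b=0$ would make $P$ the identity, hence a stabilizer, which is excluded. In every admissible case the weight of $P$ is at least $\min\{d_x,d_z\}$, so $d\ge\min\{d_x,d_z\}$.

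The computations are all short; the one place to be careful is the ``crossing'' of the two classical codes in the distance argument — the $X$-part of an undetectable error is constrained to lie in $C_z$ (not $C_x$), and its $Z$-part in $C_x$ — which is exactly where the single hypothesis $C_x^\perp\subseteq C_z$ (used once as stated and once in its symmetric reading) does all the work. I expect no genuine obstacle beyond getting this bookkeeping right and handling the trivial edge case $k=0$.
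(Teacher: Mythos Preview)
The paper does not prove this theorem; it is quoted as a known result from \cite{Gottesman97} and no argument is supplied. Your proof is the standard one and is correct: commutation of the generators follows from $C_x^\perp\subseteq C_z$, independence and the dimension count are immediate from the ranks of $H_x$ and $H_z$, and the distance bound comes from observing that the $X$-part of any normaliser element lies in $C_z$ while its $Z$-part lies in $C_x$. One small remark on your closing paragraph: the hypothesis $C_x^\perp\subseteq C_z$ is used only in the commutation step, not in the distance argument itself (the constraints $a\in C_z$, $b\in C_x$ come purely from the symplectic-product conditions with the $Z$-type and $X$-type generators respectively), so the phrase ``used once as stated and once in its symmetric reading'' slightly overstates where the hypothesis enters.
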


In an EC protocol, the syndrome measurement corresponds to the measurement of all stabilizer generators. Consider an \codepar{n,k,d} CSS code which can correct errors of maximum weight $t =\lfloor (d-1)/2 \rfloor$. Each generator is either $X$-type or $Z$-type stabilizer, and it acts non-trivially on $m$ qubits where $m \in \{1,\cdots,n\}$. We can assume that, up to qubit permutations, the stabilizer being measured is of the form $I^{\otimes n-m}\otimes X^{\otimes m}$ or $I^{\otimes n-m}\otimes Z^{\otimes m}$. The ideal circuits for measuring weight-$m$ $X$ stabilizers and weight-$m$ $Z$ stabilizers are shown in \cref{fig:StabNonFT}. 

\begin{figure}[htbp]
	\centering
	\begin{subfigure}{0.4\textwidth}
		\includegraphics[width=\textwidth,trim = {0 0.5cm 0 0}, clip]{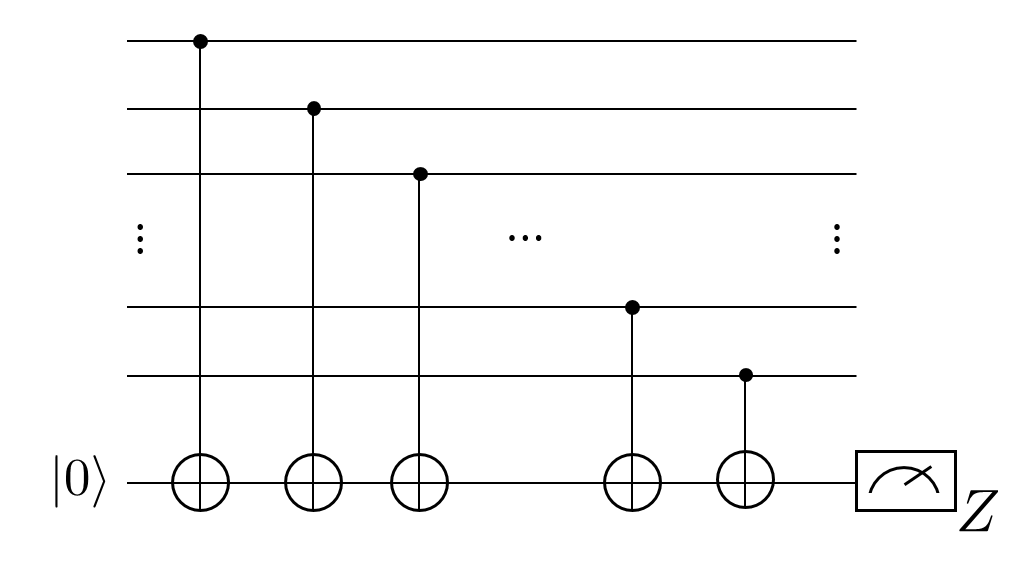}
		\caption{}
		\label{fig:StabNonFT}
	\end{subfigure}	
	\begin{subfigure}{0.25\textwidth}
		\includegraphics[width=\textwidth,trim = {0 0.8cm 0 0}, clip]{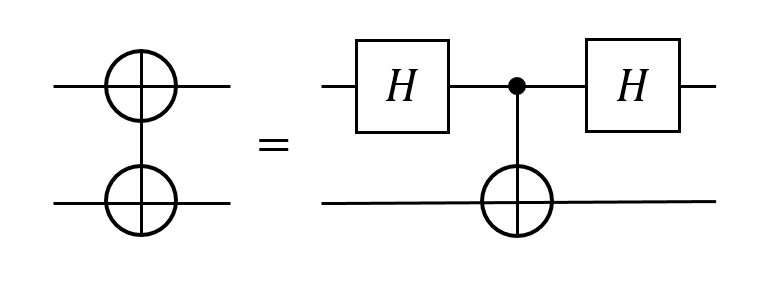}
		\caption{}
		\label{fig:XNOTgates}
	\end{subfigure}
	\caption{In \cref{fig:StabNonFT}, we illustrate the ideal circuit for measuring a weight-$m$ $Z$ stabilizer. Only the qubits with non-trivial support on the stabilizer being measured are shown. The measurement is performed on the eigenbasis of $Z$ operator (i.e., the computational basis), and the measurement results 0 and 1 correspond to the +1 and -1 eigenvalues of $Z$. The circuit for measuring the $X$ stabilizers is obtain by replacing the CNOT gates with the gates shown in \cref{fig:XNOTgates}.}
	\label{fig:CircuitStabMeas}
\end{figure}

However, the EC protocol involving the aforementioned circuit has a drawback. Suppose the circuit is not perfect, and each location (a state preparation step, a gate, or a measurement) can have a fault. Suppose that $v \leq t$ faults happen. In some cases, these $v$ faults can result in an error of weight greater than $t$ in the output state of the circuit, which may not be correctable anymore. This circuit spreads errors and is not generally suitable for building EC protocols with an important property called \emph{fault tolerance} \cite{AGP06}, defined as follows:

\begin{definition}{\underline{\smash{Fault-tolerant error correction}} \cite{AGP06}}
	
	For $t = \lfloor (d-1)/2\rfloor$, an error correction protocol using a distance-$d$ stabilizer code $C$ is \emph{$t$-fault-tolerant} if the following two conditions are satisfied:
	\begin{enumerate}
		\item For an input codeword with error of weight $v_{1}$, if $v_{2}$ faults occur during the protocol with $v_{1} + v_{2} \le t$, ideally decoding the output state gives the same codeword as ideally decoding the input state.
		\item For $v$ faults during the protocol with $v \le t$, no matter how many errors are present in the input state, the output state differs from a codeword by an error of at most weight $v$.
	\end{enumerate}
	\label{Def:FaultTolerantDef}
\end{definition}

An error on the input state might have weight $> t$, which means that it is incorrectable. Anyhow, if the number of faults is $v\leq t$, the second condition in \cref{Def:FaultTolerantDef} requires that the state after correction must differ from \emph{any valid codeword} by an error of weight $\leq v$. (One possible way to construct a FTEC protocol satisfying both conditions in \cref{Def:FaultTolerantDef} is using the minimal weight correction, defined later in \cref{def:MinErr}.)

Ideally decoding is equivalent to performing fault-free error correction. The conditions above are simultaneously required in order to ensure that low-weight errors do not spread and become incorrectable as well as to prevent errors from accumulating between different error correction rounds. 

Generally, FTEC protocols may require a lot of ancilla qubits to avoid the spread of errors within a code block. Chao and Reichardt introduced the idea of flag qubits in \cite{CR17v1} to reduce the number of ancilla qubits being used in FTEC. They also provided some circuit constructions to fault-tolerantly extract syndromes for various distance-3 perfect stabilizer codes using only two ancilla qubits. To see how the flag-FTEC works, let us examine the circuit shown in \cref{fig:StabFTwithAncilla} which is modified from the circuit in \cref{fig:StabNonFT}.
\begin{figure}[htbp]
	\centering
	\begin{subfigure}{0.4\textwidth}
		\includegraphics[width=\textwidth,trim = {0 0.5cm 0 0}, clip]{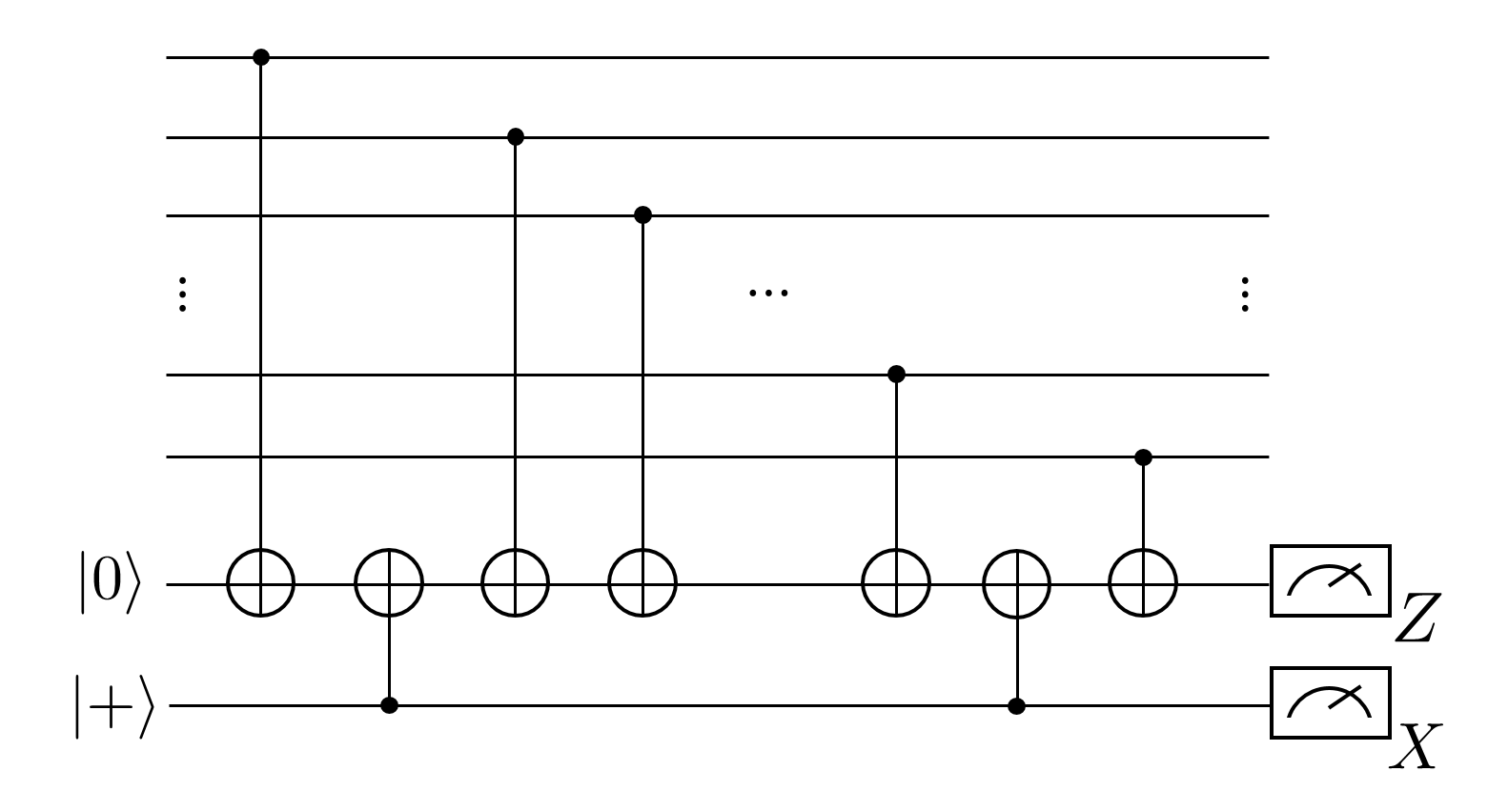}
	\end{subfigure}
	\caption{A circuit obtained from \cref{fig:StabNonFT} by including a flag qubit prepared in the $\ket{+}$ state. The measurement of flag qubit is performed on the eigenbasis of $X$ operator. If a single fault produces an error of weight $>1$ on the data qubit, the outcome of the flag-qubit measurement will be $-1$, otherwise it will be $+1$.}
	\label{fig:StabFTwithAncilla}
\end{figure}

A flag qubit is introduced in \cref{fig:StabFTwithAncilla} to detect a fault that can lead to data error of weight $>1$. If any pair of higher-weight errors detected by the flag qubit are either equivalent (up to multiplication of some stabilizer) or have different syndromes, it is possible to construct a flag-FTEC protocol which corrects higher weight errors (that arise from a single fault) using information from the flag qubit and subsequent syndrome measurements.

The idea of flag-FTEC is further developed in \cite{CB17}, and the general conditions for flag-FTEC applicable to stabilizer codes of arbitrary distance are provided. In particular, the flag-FTEC condition for a stabilizer code which can correct up to one error is as follows:
\begin{definition}{\underline{\smash{Flag-1 FTEC condition}} \cite{CB17}}
	
	Consider a stabilizer code generated by $\{g_1,\dots,g_{n-k}\}$ which can correct up to one error. Let $\mathcal{E}(g_i)$ be the set of all possible errors arising from any single fault that can cause a circuit for measuring $g_i$ to flag. For every generator $g_i$, all pairs of errors in $\mathcal{E}(g_i)$ must either have different syndromes or be equivalent up to multiplication of some stabilizer.
	\label{Def:flagFTECcond}%
\end{definition}
Showing that a code along with appropriate syndrome extraction circuits satisfy the general conditions for flag-FTEC can be quite challenging. Reference \cite{CB17} provides a sufficient condition which implies the general flag-FTEC conditions, and a FTEC protocol using flag qubits and applicable to stabilizer codes of arbitrary distance satisfying such condition was developed. This sufficient condition can be much easier to verify, and several code families were shown to satisfy the general flag-FTEC conditions. However, not all CSS codes satisfy this sufficient condition.

As was shown in \cite{CR17v1}, for codes which do not satisfy the sufficient condition in \cite{CB17}, errors are spread in the measurement circuits in a way that depends on which stabilizer generators are measured, and also on the ordering of the CNOT gates used in the measurement circuits for these generators. Therefore, these specific designs in the protocol may affect the fulfillment of the flag-FTEC conditions. With an appropriate permutation of the CNOT gates of the syndrome extraction circuits, Chao and Reichardt proved that the family of \codepar{2^r-1,2^r-1-2r,3} quantum Hamming codes satisfied the flag-1 FTEC condition. In this work, we prove some properties of cyclic CSS codes and show that it is possible to construct syndrome extraction circuits which satisfy the flag-1 FTEC condition in \cref{Def:flagFTECcond} for cyclic CSS codes of distance 3.

Reference \cite{CB17} develops the notation of $t$-flag circuits and shows that any flag-FTEC protocol will require the use of them. We generalize their definition as follows:

\begin{definition}{\underline{\smash{t-flag circuit}}}
	
	Let $C$ be an \codepar{n,k,d} stabilizer code with generators $g_1,g_2\dots,g_{n-k}$, $P$ be a weight-$m$ Pauli operator which commutes with all $g_i$'s, and ${\cal C}(P)$ be a circuit that implements a projective measurement of $P$ in the absence of faults. We say that ${\cal C}(P)$ is a \emph{$t$-flag circuit} if all of the following holds:
	\begin{enumerate}
		\item the circuit does not flag without faults, and 
		\item the circuit flags whenever a set of $v \leq t$ faults in ${\cal C}(P)$ leads to an error $E$ on the output with $\text{min}_Q(\text{wt}(E Q)) > v$ where the minimization is over $Q \in \langle P, g_1, \cdots, g_{n-k} \rangle$, the group generated multiplicatively by $P$ and the stabilizer generators $g_i$'s.
	\end{enumerate}
	\label{Def:tFlaggedCircuitDef}%
\end{definition}

In this paper, we will use certain properties of cyclic CSS codes to develop a flag-FTEC protocol.  In particular, a \emph{single} fault in the syndrome extraction circuits of CSS codes produces errors with special properties which allow us to distinguish \textit{consecutive} errors. To proceed with the analysis, we introduce some useful definitions and lemmas. We start by the definition of distinguishable errors as follows:

\begin{definition}{\underline{\smash{Distinguishable errors}}}
	
	Let $C$ be an \codepar{n,k,d} stabilizer code and let $E_1$ and $E_2$ be $n$-qubit Pauli errors with syndromes $s(E_1)$ and $s(E_2)$. We say that $E_1$ and $E_2$ are \emph{distinguishable} by $C$ if $s(E_1) \neq s(E_2)$. Otherwise we say that they are \emph{indistinguishable}. In addition, if any pair of errors from an error set $\mathcal{E}$ are distinguishable by $C$, we says that $\mathcal{E}$ is distinguishable by $C$.
	\label{Def:DisErr}%
\end{definition}

The circuit in \cref{fig:StabFTwithAncilla} is a 1-flag circuit since it will flag (the flag-qubit measurement outcome is -1) if there is a single fault causing data error of weight $>1$. From the flag-FTEC condition in \cref{Def:flagFTECcond}, our goal is to distinguish all possible higher-weight errors by subsequent stabilizer measurements. Note that the set of higher-weight errors depends on the choice of generators and the permutation of CNOT gates, and only some choices and permutations will lead to a distinguishable error set. Some CSS codes that satisfy the sufficient condition in \cite{CB17} can be used in a flag-FTEC protocol\footnote{Note that for such codes, the order of the CNOT gates in a $t$-flag circuit is not important.}. However, whether flag-FTEC techniques can be applied to general CSS codes is still unknown. 

Observe that permuting the CNOT gates in the measurement circuit is equivalent to permuting columns of the stabilizer matrices. In order to find CSS code families such that flag-FTEC techniques can be used, we will consider fixing the CNOT gates of syndrome extraction circuits in the \textit{normal permutation}, (i.e., applying CNOT gates from top to bottom as in \cref{fig:StabFTwithAncilla})\footnote{Note that for some specific codes, it is certainly possible to find circuits with fewer ancilla qubits by choosing an appropriate permutation of the CNOT gates.}. Subsequently, we will find conditions that need to be satisfied by the $X$ and $Z$ stabilizer matrices.

Assume that a faulty CNOT gate can cause a 2-qubit error of the form $P_1\otimes P_2$ where $P_1,P_2 \in \{ I,X,Y,Z \}$ are Pauli errors on the control and the target qubits, respectively. Consider a circuit for measuring stabilizers of the form $I^{\otimes n-m}\otimes Z^{\otimes m}$ with the normal permutation of CNOT gates as in \cref{fig:StabFTwithAncilla} where $m\in\{1,\cdots,n\}$. A single fault at a CNOT location can result in the following types of errors:
\begin{enumerate}[label={(\alph*)}]
	\item If an error from a faulty CNOT gate is of the form $P_1 \otimes P_2$ where $P_1\in \{I,X,Y,Z \}$ and $P_2 \in \{ I,X \}$, then the data error is of weight $\leq 1$ and the flag outcome is $+1$. \label{form_a}%
	\item If an error from a faulty CNOT gate is $P_1 \otimes P_2$ where $P_1=I$ and $P_2 \in \{ Y,Z \}$, the data error is of the form $I^{\otimes n-m+c}\otimes Z^{\otimes m-c}$ where $c\in\{1,\cdots,m\}$. In the cases where the data error has weight $>1$, the flag outcome is $-1$.
	\label{form_b}%
	\item If an error from a faulty CNOT is $P_1 \otimes P_2$ where $P_1 \in \{ X,Y,Z \}$ and $P_2 \in \{ Y,Z \}$, the data error is of the form $I^{\otimes n-m+c-1}\otimes P_1 \otimes Z^{\otimes m-c}$ where $c\in\{1,\cdots,m\}$. In the cases where the data error has weight $>1$, the flag outcome is $-1$.
	\label{form_c}%
\end{enumerate} 
Data errors of the form \ref{form_b} or \ref{form_c} arise due to the propagation of $Z$ errors from the target to control qubit of CNOT gates. In addition, if a faulty CNOT gate causes the error $Z\otimes Z$, this can be viewed as an error $I\otimes Z$ caused by the preceding CNOT gate. Let $\mathcal{E}_+$ and $\mathcal{E}_-$ be sets of errors corresponding to the flag outcome $+1$ and $-1$, respectively. Consider an \codepar{n,k,d} CSS code $C$ constructed from two classical codes $C_x$ and $C_z$ as in \cref{Theorem:CSS}. It is obvious that $\mathcal{E}_+$ is distinguishable by $C$ if $d\geq 3$. The distinguishability of errors of the form \ref{form_b} in $\mathcal{E}_-$ depend on the classical code $C_x$. Also, any error of the form \ref{form_c} in $\mathcal{E}_-$ can be considered as a product of an error of the form \ref{form_b} and a weight-1 $X$-type error. Therefore, if the distance of $C_z$ is $d_z\geq 3$ and the code $C_x$ can distinguish all errors in the the form \ref{form_b}, then $\mathcal{E}_-$ is distinguishable by $C$. The same argument can also be applied to circuits for measuring $X$ stabilizers. 

We can see that the ability of the code to distinguish errors of the form \ref{form_b} is crucial in a flag-FTEC protocol. In order to develop a flag-FTEC protocol for cyclic CSS codes, the following definitions will be very useful:

\begin{definition}{\underline{\smash{Left cyclic-shift}}}
	
	Let $P = P_1\otimes \cdots \otimes P_n$ be an $n$-qubit Pauli operator and $l \in \{0,1,\dots,n-1\}$. The \emph{$l$-qubit left cyclic-shift} of the operator P, denoted by $\mathcal{L}(P,l)$, is defined as
	\begin{align}
	\mathcal{L}(P, 0) &= P, \\
	\mathcal{L}(P, l) &= P_{l+1}\otimes \cdots \otimes P_n \otimes P_1 \otimes \cdots \otimes P_l \quad \text{for}\quad l\neq 0.
	\end{align}
	\label{def:Leftshift}%
\end{definition}

\begin{definition}{\underline{\smash{Consecutive error set}}}
	
	Let $n$ be the number of qubits and $l \in \{0,1,\dots,n-1\}$. A \emph{consecutive-$X$ error set} $\mathcal{E}^x_{l,n}$ and a \emph{consecutive-$Z$ error set} $\mathcal{E}^z_{l,n}$ are sets of the form
	\begin{align}
	\mathcal{E}^x_{l,n} &=  \{\mathcal{L}(I^{\otimes n-p} \otimes X^{\otimes p},l):\ p \in \{0,1,\dots,n-1\} \}, \\
	\mathcal{E}^z_{l,n} &=  \{\mathcal{L}(I^{\otimes n-p} \otimes Z^{\otimes p},l):\ p \in \{0,1,\dots,n-1\} \}.
	\end{align}
	A \emph{consecutive error product set} $\mathcal{E}^P_{l,n}$ is defined as
	\begin{align}
	\mathcal{E}^P_{l,n} &=  \{E_x \cdot E_z: E_x \in \mathcal{E}^x_{l,n}, E_z \in \mathcal{E}^z_{l,n} \}.
	\end{align}
	\label{def:Consecutive}%
\end{definition}

In order to distinguish all errors in each consecutive error set, the $X$ and $Z$ stabilizer matrices must satisfy the conditions in the following lemma:

\begin{lemma}
	
	Let $C$ be a CSS code constructed from the classical cyclic codes $C_x$ and $C_z$ following \cref{Theorem:CSS} with parity check matrices $H_x$ and $H_z$ of the form
	\begin{align}
	H_x =\begin{pmatrix}
	x_{1,1} & x_{1,2} & \dots & x_{1,n} \\
	x_{2,1} & x_{2,2} & \dots & x_{2,n} \\
	\dots & & & \dots \\
	x_{r_x,1} & x_{r_x,2} & \dots & x_{r_x,n} \\
	\end{pmatrix},
	\end{align}
	\begin{align}
	H_z =\begin{pmatrix}
	z_{1,1} & z_{1,2} & \dots & z_{1,n} \\
	z_{2,1} & z_{2,2} & \dots & z_{2,n} \\
	\dots & & & \dots \\
	z_{r_z,1} & z_{r_z,2} & \dots & z_{r_z,n} \\
	\end{pmatrix},
	\end{align}
	and let $\mathcal{E}^x_{0,n}$, $\mathcal{E}^z_{0,n}$, and $\mathcal{E}^P_{0,n}$ be consecutive-$X$ error set, consecutive-$Z$ error set, and consecutive error product set, respectively. Then, 
	\begin{enumerate}
		\item $\mathcal{E}^z_{0,n}$ is distinguishable by $C$ iff for all $p,q\in\{0,...,n{-}1\}$ such that $p>q$, there exists $i\in\{1,\dots,r_x\}$ such that $x_{i,n-p+1}\oplus\dots\oplus x_{i,n-q}=1$.
		\item $\mathcal{E}^x_{0,n}$ is distinguishable by $C$ iff for all $p,q\in\{0,...,n{-}1\}$ such that $p>q$, there exists $i\in\{1,\dots,r_z\}$ such that
		$z_{i,n-p+1}\oplus\dots\oplus z_{i,n-q}=1$.
		\item $\mathcal{E}^P_{0,n}$ is distinguishable by $C$ iff both $\mathcal{E}^z_{0,n}$ and $\mathcal{E}^x_{0,n}$ are distinguishable by $C$.
	\end{enumerate}
	\label{Lemma:Lem1}%
\end{lemma}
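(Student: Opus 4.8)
The plan is to translate the statement into linear algebra over $\mathbb{F}_2$, exploiting the CSS structure: the syndrome of a $Z$-type error is determined entirely by the $X$-stabilizer matrix $H_x$ (only the $X$-generators can anticommute with it), and dually the syndrome of an $X$-type error is determined by $H_z$. Write $\mathrm{supp}(E)\in\mathbb{F}_2^n$ for the support vector of a Pauli $E$, and for $p\in\{0,\dots,n-1\}$ set $E_z^{(p)}=I^{\otimes n-p}\otimes Z^{\otimes p}$ and $E_x^{(p)}=I^{\otimes n-p}\otimes X^{\otimes p}$; both have the same support $w^{(p)}$, the indicator vector of the last $p$ coordinates. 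For Part 1, since $E_z^{(p)}$ is purely $Z$-type its syndrome equals $H_x w^{(p)}$, so for $p>q$ the errors $E_z^{(p)}$ and $E_z^{(q)}$ are indistinguishable by $C$ iff $H_x(w^{(p)}\oplus w^{(q)})=0$. The vector $w^{(p)}\oplus w^{(q)}$ is precisely the indicator of the coordinate block $\{n-p+1,\dots,n-q\}$, so this says $x_{i,n-p+1}\oplus\cdots\oplus x_{i,n-q}=0$ for every row $i$; negating over all pairs $p>q$ gives the claimed criterion. Part 2 follows verbatim after exchanging the roles of $X$ and $Z$ (syndromes of $X$-type errors are read from $H_z$) and $(r_x,x_{i,j})\leftrightarrow(r_z,z_{i,j})$.

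For Part 3, the structural fact I would use is that the syndrome of a product error $E=E_x\cdot E_z$ with $E_x\in\mathcal{E}^x_{0,n}$ and $E_z\in\mathcal{E}^z_{0,n}$ splits as the concatenation of an $X$-generator part equal to $s(E_z)$ and a $Z$-generator part equal to $s(E_x)$, because the $X$-generators are insensitive to the $X$ component of $E$ and the $Z$-generators to its $Z$ component. I also note that $(E_x,E_z)\mapsto E_x\cdot E_z$ is injective on $\mathcal{E}^x_{0,n}\times\mathcal{E}^z_{0,n}$, since the $X$- and $Y$-positions of the product recover $E_x$ and the $Z$- and $Y$-positions recover $E_z$. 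For the ``if'' direction, take two distinct elements of $\mathcal{E}^P_{0,n}$; by injectivity they differ in the $X$ factor or in the $Z$ factor, and distinguishability of $\mathcal{E}^x_{0,n}$ (resp.\ $\mathcal{E}^z_{0,n}$) forces the corresponding half of the two syndromes to differ. For the ``only if'' direction, note that $I^{\otimes n}$ lies in both $\mathcal{E}^x_{0,n}$ and $\mathcal{E}^z_{0,n}$ (the $p=0$ case), so $\mathcal{E}^z_{0,n}\subseteq\mathcal{E}^P_{0,n}$ and $\mathcal{E}^x_{0,n}\subseteq\mathcal{E}^P_{0,n}$, and distinguishability passes to subsets.

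The only delicate point, and the nearest thing to an obstacle, is the CSS syndrome bookkeeping: keeping straight that $Z$-type errors are detected by the $X$-stabilizers and $X$-type errors by the $Z$-stabilizers, and that the two halves of the syndrome of a mixed error decouple cleanly. Everything else is elementary $\mathbb{F}_2$-linear algebra; in particular the cyclicity of $C_x$ and $C_z$ is not used in this lemma and only enters in the later results built on it.
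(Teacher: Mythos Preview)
Your proposal is correct and follows essentially the same route as the paper's proof: compute the $i$-th syndrome bit of a consecutive $Z$-error as the $\mathbb{F}_2$-inner product of row $i$ of $H_x$ with its support vector, take the XOR for a pair $E_p,E_q$, and read off the condition; dualize for Part~2; and for Part~3 use that the CSS syndrome of $E_x\cdot E_z$ splits as $(s_x(E_z)\,|\,s_z(E_x))$, together with injectivity of $(E_x,E_z)\mapsto E_x\cdot E_z$. Your explicit injectivity argument and your ``only if'' via $\mathcal{E}^x_{0,n},\mathcal{E}^z_{0,n}\subseteq\mathcal{E}^P_{0,n}$ (using the $p=0$ identity element) are slightly more carefully stated than the paper's version, and your remark that cyclicity plays no role in this lemma is correct.
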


A proof of \cref{Lemma:Lem1} is given in \cref{app:LemmaProofs}.

Note that consecutive error sets in \cref{Lemma:Lem1} are defined on $n$ qubits. In particular, consecutive error sets defined on $m$ qubits are distinguishable iff the submatrices of $H_x$ and $H_z$ corresponding to measurements on $m$ qubits satisfy similar conditions. In the next section, we will show that the cyclic symmetry of cyclic CSS codes can simplify the conditions in \cref{Lemma:Lem1}.

\section{Cyclic CSS codes and error distinguishability}
\label{sec:CyclicCodesDistinguishability}%
In \cref{sec:ReviewFlag}, the conditions for distinguishing errors in the consecutive error sets are given in \cref{Lemma:Lem1}. Notice that there are some sufficient conditions for distinguishability in  Statements 1 and 2 that are similar, different only by some qubit shift. It is possible to simplify \cref{Lemma:Lem1} if the CSS code has cyclic symmetry. In this section, we begin by stating the definition of classical cyclic codes and outlining some of their properties. Afterwards, \cref{Lemma:Cyclic_gen,Lemma:Lem3} will be given, and \cref{theo:MainTheorem} which is the main theorem in this work will be proved.

\begin{definition}{\underline{\smash{Classical cyclic code}} \cite{MS77}}

Let $C$ be a classical binary linear code of length $n$. $C$ is \emph{cyclic} if any cyclic shift of a codeword is also a codeword, i.e., if $(c_1,c_2,\dots,c_n)$ is in a codeword, then so is  $(c_n,c_1,\dots,c_{n-1})$. 
\end{definition}

Let $C$ be a classical cyclic code of length $n$. There exists a unique generator polynomial $g(x)=\sum_{i=0}^\alpha g_ix^i$ which is also a unique monic polynomial of minimal degree in $C$ such that $C$ is generated by the generator matrix
\begin{align}
\begin{pmatrix}
g_0 & g_1 & g_2 & \dots & g_\alpha & 0 & \dots & 0 \\
0 & g_0 & g_1 & \dots & g_{\alpha-1} & g_\alpha & \dots & 0 \\
\dots & & & & & & & \dots\\
0 & \dots & g_0 & \dots & & & \dots & g_\alpha
\end{pmatrix}.
\end{align}
The polynomial $h(x) = (x^n-1)/g(x) = \sum_{i=0}^\beta h_ix^i$ is called the check polynomial of $C$. The parity check matrix of $C$ is
\begin{align}
\begin{pmatrix}
h_\beta & h_{\beta-1} & \dots & h_1 & h_0 & 0 & \dots & 0 \\
0 & h_\beta & \dots & h_2 & h_1 & h_0 & \dots & 0 \\
\dots & & & & & & & \dots\\
0 & \dots & h_\beta & \dots & & & \dots & h_0
\end{pmatrix}.
\end{align}

It is known that any classical Hamming code can be made cyclic \cite{MS77}. Thus, a cyclic CSS code can be constructed from permuting columns of a quantum Hamming code's stabilizer matrices. In \cite{QuantumCodesCyclic}, it was shown how to construct a cyclic CSS code from two classical cyclic codes.

By the symmetries of a cyclic code, we can show in the following lemma that the left cyclic-shift of operators in the generating set also generates the same code.

\begin{lemma}
	
	Let $C$ be a CSS code constructed from the classical cyclic codes $C_x$ and $C_z$ following \cref{Theorem:CSS}. Suppose that the stabilizer group of $C$ can be generated by $\{g_1,g_2,\dots,g_{n-k}\}$, then the stabilizer group of $C$ can also be generated by $\{\mathcal{L}(g_{1},l),\mathcal{L}(g_{2},l),\dots,\mathcal{L}(g_{n-k},l)\}$ for any $l \in \{0,1,\dots,n-1\}$.
	\label{Lemma:Cyclic_gen}%
	
\end{lemma}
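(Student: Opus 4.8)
The plan is to exploit the cyclic structure of $C_x$ and $C_z$ directly at the level of the parity check matrices. Recall that for a classical cyclic code the parity check matrix has the ``staircase'' form built from the reversed check polynomial: the $j$-th row is a shift of the first row. The key observation is that applying a left cyclic-shift $\mathcal{L}(\cdot,l)$ to an $X$-type stabilizer corresponds exactly to a cyclic permutation of the columns of the corresponding row of $H_x$ (and similarly for $Z$-type stabilizers and $H_z$). So the statement reduces to a purely classical claim: if $H$ is a parity check matrix for a cyclic code $C$ of length $n$, then the matrix $H'$ obtained by cyclically permuting all columns of $H$ by the same amount $l$ is still a parity check matrix for $C$ — i.e., $H'$ has the same row space as $H$, equivalently its rows are still orthogonal to every codeword of $C$ and it still has rank $n-k$.

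The main steps I would carry out are as follows. First I would set up notation: write $\pi_l$ for the cyclic shift of coordinates by $l$ positions, extended to act on matrices column-wise, and note that $\mathcal{L}(g_i,l)$ has symplectic representation obtained by applying $\pi_l$ to the two halves of $\sigma(g_i)$ separately — since $g_i$ is purely $X$-type or purely $Z$-type, only one half is nonzero, and that half is a row of $H_x$ or $H_z$. Second, I would verify that each $\mathcal{L}(g_i,l)$ is still a stabilizer, i.e., is in the stabilizer group: since $C_x$ is cyclic, $C_x^\perp$ is cyclic as well, so $\pi_l$ applied to a generator of $C_x^\perp$ (a row of $H_x$) lands back in $C_x^\perp$; because $C_x^\perp \subseteq C_z$, this $X$-type operator commutes with all $Z$-type generators, and commutation among $X$-type operators is automatic. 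Hence $\mathcal{L}(g_i,l)$ lies in the stabilizer group. Third, I would argue that the shifted set is a generating set: applying $\pi_l$ is an invertible linear map on $\FF_2^n$, so it maps a basis of $C_x^\perp$ to a basis of $C_x^\perp$ (cyclicity guarantees the image is still contained in, hence equal to, $C_x^\perp$), and likewise for $C_z^\perp$; therefore $\{\mathcal{L}(g_i,l)\}$ spans the same symplectic subspace as $\{g_i\}$, so it generates the same stabilizer group. Finally, a brief remark that the single shift parameter $l$ applied simultaneously to all generators is exactly a global relabeling of the $n$ physical qubits by the cyclic permutation, under which the code space is invariant, gives an alternative one-line proof.

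I expect the main obstacle — really more a bookkeeping subtlety than a deep difficulty — to be keeping straight the distinction between the code $C_x$, its dual $C_x^\perp$ (which is what $H_x$ generates), and the quantum code $C$, and making sure cyclicity is invoked for the right object: one needs that the \emph{dual} of a cyclic code is cyclic (equivalently, that the row space of $H$, not $C$ itself, is shift-invariant), which follows from the standard fact that $C^\perp$ of a cyclic code is cyclic with generator polynomial the reciprocal of $h(x)$. Once that is in hand, everything else is the observation that $\mathcal{L}(\cdot,l)$ on a purely-$X$ or purely-$Z$ operator is just the coordinate shift $\pi_l$ on the relevant classical codeword, together with the fact that $\pi_l$ is a vector-space automorphism preserving $C_x^\perp$ and $C_z^\perp$. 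I would also note in passing that this lemma is the reason Statements 1 and 2 of \cref{Lemma:Lem1} can be simplified in the next section: shifting the generating set lets us assume consecutive error sets start at a fixed position without loss of generality.
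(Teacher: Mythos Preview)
Your proposal is correct and takes essentially the same route as the paper: both reduce the question to the classical fact that the row spaces of $H_x$ and $H_z$ (namely $C_x^\perp$ and $C_z^\perp$) are themselves cyclic codes, so a simultaneous cyclic column-shift of $H_x$ or $H_z$ leaves the row space unchanged, whence the shifted generators span the same stabilizer group. You even flag the right subtlety---that cyclicity must be invoked for the \emph{duals} $C_x^\perp,C_z^\perp$---which the paper uses without comment.

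One small gap: the lemma as stated allows $\{g_1,\dots,g_{n-k}\}$ to be an \emph{arbitrary} generating set, not necessarily consisting of pure $X$-type or $Z$-type operators, whereas your argument assumes ``$g_i$ is purely $X$-type or purely $Z$-type.'' The paper closes this by first row-reducing an arbitrary generating set to the block-diagonal CSS form, applying the shift argument there, and then undoing the row operations---observing that this composite transformation sends the original generators to their cyclic shifts. Also, your ``alternative one-line proof'' via global qubit relabeling is not self-contained: invariance of the code space under the cyclic permutation is equivalent to invariance of the stabilizer group, which is precisely the claim, so it still rests on the classical argument you already gave.
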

The proof of \cref{Lemma:Cyclic_gen} is given in \cref{app:LemmaProofs}. 

In the previous section, \cref{Lemma:Lem1} gives sufficient and necessary conditions for a CSS code to be able to distinguish all errors in the consecutive error sets. The conditions can be simplified by using the symmetry of cyclic codes as follows:

\begin{lemma}
	
	Let $C$ be a CSS code constructed from the classical cyclic codes $C_x$ and $C_z$ (following \cref{Theorem:CSS}) with parity check matrices $H_x$ and $H_z$,
	\begin{align}
	H_x &=\begin{pmatrix}
	x_{1,1} & x_{1,2} & \dots & x_{1,n} \\
	x_{2,1} & x_{2,2} & \dots & x_{2,n} \\
	\dots & & & \dots \\
	x_{r_x,1} & x_{r_x,2} & \dots & x_{r_x,n} \\
	\end{pmatrix},\\
	H_z &=\begin{pmatrix}
	z_{1,1} & z_{1,2} & \dots & z_{1,n} \\
	z_{2,1} & z_{2,2} & \dots & z_{2,n} \\
	\dots & & & \dots \\
	z_{r_z,1} & z_{r_z,2} & \dots & z_{r_z,n} \\
	\end{pmatrix}.
	\end{align}
	Let $l \in \{0,1,\dots,n{-}1\}$, and let $\mathcal{E}^x_{l,n}$, $\mathcal{E}^z_{l,n}$, and $\mathcal{E}^P_{l,n}$ be consecutive-$X$ error set, consecutive-$Z$ error set, and consecutive error product set, respectively. Then, 
	\begin{enumerate}
		\item $\mathcal{E}^z_{l,n}$ is distinguishable by $C$ iff for all $u_x\in\{2,\dots,n\}$, there exists $i\in\{1,\dots,r_x\}$ such that $x_{i,u_x}\oplus\dots\oplus x_{i,n}=1$.
		\item $\mathcal{E}^x_{l,n}$ is distinguishable by $C$ iff for all $u_z\in\{2,\dots,n\}$, there exists $i\in\{1,\dots,r_z\}$ such that $z_{i,u_z}\oplus\dots\oplus z_{i,n}=1$.
		\item $\mathcal{E}^P_{l,n}$ is distinguishable by $C$ iff both $\mathcal{E}^z_{l,n}$ and $\mathcal{E}^x_{l,n}$ are distinguishable by $C$.
	\end{enumerate}
	\label{Lemma:Lem3}%
\end{lemma}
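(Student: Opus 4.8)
The plan is to reduce \cref{Lemma:Lem3} to the $l=0$ case, which is exactly what \cref{Lemma:Lem1} provides, and then to use the cyclic invariance of $C_x$ and $C_z$ to collapse the family of arbitrary-interval conditions in \cref{Lemma:Lem1} down to the family of suffix conditions that appear here.

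First I would remove the dependence on $l$. The map $\mathcal{L}(\cdot,l)$ merely permutes tensor factors, so it is an automorphism of the $n$-qubit Pauli group (up to phases, which are irrelevant to syndromes), and by \cref{Lemma:Cyclic_gen} it sends the stabilizer group $S=\langle g_1,\dots,g_{n-k}\rangle$ onto $\langle\mathcal{L}(g_1,l),\dots,\mathcal{L}(g_{n-k},l)\rangle=S$. For Pauli errors $E_1,E_2$ one has $s(E_1)=s(E_2)$ iff $E_1E_2$ centralizes $S$; applying the automorphism, this holds iff $\mathcal{L}(E_1E_2,l)=\mathcal{L}(E_1,l)\mathcal{L}(E_2,l)$ centralizes $\mathcal{L}(S,l)=S$, i.e. iff $s(\mathcal{L}(E_1,l))=s(\mathcal{L}(E_2,l))$. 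Thus $\mathcal{L}(\cdot,l)$ preserves distinguishability of pairs of errors, and since $\mathcal{E}^x_{l,n}$, $\mathcal{E}^z_{l,n}$, and $\mathcal{E}^P_{l,n}$ are obtained from $\mathcal{E}^x_{0,n}$, $\mathcal{E}^z_{0,n}$, and $\mathcal{E}^P_{0,n}$ by applying $\mathcal{L}(\cdot,l)$ to each element, each of these sets is distinguishable by $C$ iff its $l=0$ counterpart is. This reduces all three statements to $l=0$; in particular Statement 3 is then immediate from Statement 3 of \cref{Lemma:Lem1}.

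It remains to prove Statement 1 for $l=0$ (Statement 2 is identical with $X,H_x,C_x$ replaced by $Z,H_z,C_z$). By \cref{Lemma:Lem1}, $\mathcal{E}^z_{0,n}$ is distinguishable iff for every pair $p>q$ in $\{0,\dots,n-1\}$ some row $i$ of $H_x$ satisfies $x_{i,n-p+1}\oplus\cdots\oplus x_{i,n-q}=1$, equivalently iff the indicator vector of the consecutive block $\{n-p+1,\dots,n-q\}$ is not a codeword of $C_x$ (not in $\ker H_x$). Since $C_x$ is cyclic, it is closed under cyclic shifts, and any two consecutive blocks of the same size are cyclic shifts of one another; hence the indicator of $\{n-p+1,\dots,n-q\}$ lies in $C_x$ iff the indicator of the suffix block $\{n-(p-q)+1,\dots,n\}$ of the same size $p-q$ does. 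As $(p,q)$ ranges over the admissible pairs, the size $\ell=p-q$ ranges exactly over $\{1,\dots,n-1\}$, so $u_x:=n-\ell+1$ ranges over $\{2,\dots,n\}$, and the \cref{Lemma:Lem1} condition becomes: for all $u_x\in\{2,\dots,n\}$ some row $i$ has $x_{i,u_x}\oplus\cdots\oplus x_{i,n}=1$. This is Statement 1.

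The two conceptual inputs are that $\mathcal{L}(\cdot,l)$ is a stabilizer-group automorphism (supplied by \cref{Lemma:Cyclic_gen}) and that membership of a consecutive-block indicator in $C_x$ depends only on the block size because $C_x$ is cyclic; everything else is bookkeeping. The step I would check most carefully is the index accounting in the previous paragraph, namely pairing $(p,q)$ with $p>q$ against block sizes $\ell=p-q\in\{1,\dots,n-1\}$ and then suffix-start indices $u_x=n-\ell+1\in\{2,\dots,n\}$, and correctly identifying the XOR condition with non-membership of the block indicator in $\ker H_x$; this is where an off-by-one could slip in. I do not anticipate any deeper obstacle.
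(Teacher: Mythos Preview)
Your proposal is correct and follows essentially the same approach as the paper: reduce from arbitrary $l$ to $l=0$ via the cyclic-shift automorphism together with \cref{Lemma:Cyclic_gen}, invoke \cref{Lemma:Lem1} there, and then use cyclic invariance to collapse the two-parameter interval condition to the one-parameter suffix condition. The only cosmetic difference is that you phrase the collapse step as ``the indicator of a consecutive block lies in $C_x=\ker H_x$ iff its cyclic shift to a suffix does,'' whereas the paper carries out the same reduction by explicitly shifting the generators and reading off the resulting matrix-entry identity; the content is the same.
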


The cyclic symmetry of $C_x$ and $C_z$ can simplify \cref{Lemma:Lem1}, resulting in fewer sufficient conditions for distinguishability in Statements 1 and 2; we can fix $q$ in \cref{Lemma:Lem1} to be $0$ and choose $u = n-p+1$. This reduces the number of error pairs in consecutive error sets to be distinguished. Moreover, all statements in \cref{Lemma:Lem3} can also be applied to consecutive error sets of any $l \in \{0,1,\dots,n-1\}$. A proof of \cref{Lemma:Lem3} is given in \cref{app:LemmaProofs}.

Now we are ready to prove a main theorem in this work.

\begin{theorem}
	
	Let $C$ be an \codepar{n,k,d} CSS code constructed from the $[n,k_x,d_x]$ classical cyclic code $C_x$ and the $[n,k_z,d_z]$ classical cyclic code $C_z$, $l \in \{0,1,\dots,n-1\}$, and $\mathcal{E}^P_{l,n}$ be a consecutive error product set. If both $d_x, d_z \ge 3$, then $\mathcal{E}^P_{l,n}$ is distinguishable by $C$.
	\label{theo:MainTheorem}%
\end{theorem}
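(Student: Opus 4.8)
The plan is to route the entire argument through \cref{Lemma:Lem3} and reduce it to an elementary fact about cyclic codes. By Statement~3 of \cref{Lemma:Lem3}, $\mathcal{E}^P_{l,n}$ is distinguishable by $C$ once both $\mathcal{E}^z_{l,n}$ and $\mathcal{E}^x_{l,n}$ are, and Statements~1 and~2 characterise those two facts purely in terms of the entries of $H_x$ and $H_z$ by conditions that do not involve $l$ at all (so distinguishability of the consecutive error sets is in fact the same for every $l$). Thus the theorem reduces to showing: if $d_x\ge 3$, then for every $u\in\{2,\dots,n\}$ there is a row index $i$ with $x_{i,u}\oplus\dots\oplus x_{i,n}=1$, together with the mirror statement for $H_z$ under $d_z\ge 3$. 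I will treat only the $H_x$ case; the $H_z$ case is proved identically, exchanging $C_x$ for $C_z$.

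The key step is to translate the suffix-sum condition into the language of codewords. For $u\in\{1,\dots,n\}$ let $w_u\in\mathbb{F}_2^n$ be the indicator vector of the tail block $\{u,u+1,\dots,n\}$, i.e.\ a single run of $n-u+1$ ones at the end of the string. Then the $i$-th coordinate of $H_x w_u^{T}$ is precisely the suffix sum $x_{i,u}\oplus\dots\oplus x_{i,n}$, so the existence of an $i$ with $x_{i,u}\oplus\dots\oplus x_{i,n}=1$ is equivalent to $H_x w_u^{T}\neq 0$, i.e.\ to $w_u\notin C_x$. It therefore suffices to prove that $w_u\notin C_x$ for every $u\in\{2,\dots,n\}$; note that for such $u$ the Hamming weight of $w_u$ equals $n-u+1\in\{1,\dots,n-1\}$, so $w_u$ is neither $0$ nor the all-ones word.

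The contradiction then comes from cyclicity. Suppose $w_u\in C_x$ for some $u\in\{2,\dots,n\}$. Since $C_x$ is cyclic, the cyclic shift $\sigma(w_u)$, sending $(c_1,\dots,c_n)\mapsto(c_n,c_1,\dots,c_{n-1})$, is again a codeword, and hence so is $w_u\oplus\sigma(w_u)$. Computing coordinates, the shift simply moves the two boundary points of the run, and $w_u\oplus\sigma(w_u)$ turns out to be the indicator vector of $\{1,u\}$ --- a nonzero codeword of Hamming weight exactly $2$ (here $u\ge 2$, so $u\neq 1$), contradicting $d_x\ge 3$. Hence $w_u\notin C_x$ for all $u\in\{2,\dots,n\}$, so $\mathcal{E}^z_{l,n}$ is distinguishable by Statement~1 of \cref{Lemma:Lem3}; the symmetric argument gives that $\mathcal{E}^x_{l,n}$ is distinguishable; and Statement~3 of \cref{Lemma:Lem3} then yields that $\mathcal{E}^P_{l,n}$ is distinguishable by $C$.

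The one step carrying real content is the middle one --- recognising that the suffix-sum conditions of \cref{Lemma:Lem3} assert exactly that the tail-run indicators $w_u$ are non-codewords, and that cyclic invariance forces any such codeword down to weight $2$. I expect this observation to be the main obstacle; everything after it is bookkeeping, and the appearance of $l$ in the statement is harmless since \cref{Lemma:Lem3} already supplies $l$-independent conditions.
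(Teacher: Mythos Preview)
Your proof is correct and is, in fact, cleaner than the paper's. Both arguments start the same way: reduce via Statement~3 of \cref{Lemma:Lem3} to the distinguishability of $\mathcal{E}^z_{l,n}$ and $\mathcal{E}^x_{l,n}$, then invoke Statements~1 and~2 to turn this into the suffix-sum condition on the rows of $H_x$ and $H_z$. From there the two diverge. The paper assumes the suffix-sum condition fails for some $u_x$, then uses \cref{Lemma:Cyclic_gen} to apply repeated cyclic shifts and deduce that each row of $H_x$ is periodic with period $w_x=\gcd(u_x-1,n)$; this forces every $Z_{l_x}Z_{l_x+w_x}$ to commute with all stabilizer generators, and the paper then splits into two cases (such an operator is either a nontrivial logical, contradicting $d\ge\min\{d_x,d_z\}\ge 3$, or a stabilizer, in which case $C_z^\perp\subseteq C_x$ puts a weight-$2$ word in $C_x$).

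Your route is more direct: you recognise the suffix-sum condition as exactly $H_x w_u^T\neq 0$, i.e.\ $w_u\notin C_x$, and then use cyclicity of $C_x$ at the classical level to produce $w_u\oplus\sigma(w_u)$, a weight-$2$ codeword, immediately contradicting $d_x\ge 3$. This bypasses \cref{Lemma:Cyclic_gen}, the periodicity computation, and the two-case split entirely, and it never touches the quantum stabilizer structure or the inclusion $C_z^\perp\subseteq C_x$. The price is nil; the benefit is a shorter argument that makes transparent why only the individual classical distances $d_x,d_z\ge 3$ are needed.
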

	
\begin{proof}
	\;
	
	Suppose by contradiction that $\mathcal{E}_{l,n}^P$ is not distinguishable by $C$. Then, at least one of $\mathcal{E}_{l,n}^z$ and $\mathcal{E}_{l,n}^x$ is not distinguishable by $C$. Similar analysis applies to either case, so, suppose $\mathcal{E}_{l,n}^z$ is not distinguishable by $C$. We next invoke \cref{Lemma:Lem3}, and to do so, let the cyclic code $C_x$ has parity check matrix 
	\begin{equation}
	H_x =\begin{pmatrix}
	x_{1,1} & x_{1,2} & \dots & x_{1,n} \\
	x_{2,1} & x_{2,2} & \dots & x_{2,n} \\
	\dots & & & \dots \\
	x_{r_x,1} & x_{r_x,2} & \dots & x_{r_x,n} \\
	\end{pmatrix},
	\end{equation}
	where $x_{i_1+1,(j+1)\pmod n}=x_{i_1,j}$ for all $i_1\in\{1,\dots,r_x-1\}$ and $j\in\{1,\dots,n\}$. 
	%
	%
	By Statement 1 of \cref{Lemma:Lem3}, we know that $\mathcal{E}_{l,n}^z$ is indistinguishable by $C$ iff there exists $u_x \in\{2,3,...,n\}$ such that $x_{i,u_x}\oplus\dots\oplus x_{i,n}=0$ for all 
	$i \in \{1, \cdots, r_x \}$; i.e., there exists a pair of errors in $\mathcal{E}_{l,n}^z$ which cannot be distinguished by any generator of $C$.
	%
	%
	For $i=1$, we have 
	\begin{align}
	x_{1,u_x}\oplus\dots\oplus x_{1,n}&=0 \,.
	\label{eq:Cond1}%
	\end{align}
	Using \cref{Lemma:Cyclic_gen}, we obtain a new generating set for $C$ where each generator is the 1-qubit left cyclic-shift of the old one.  
	Applying the above to the new generator set, we have
	\begin{align}
	x_{1,u_x+1}\oplus\dots\oplus x_{1,n}\oplus x_{1,1}=0. \label{eq:Cond2}
	\end{align}
	Now repeating the left cyclic shifts gives 
	\begin{align}
	x_{1,u_x+2}\oplus\dots\oplus x_{1,n}\oplus x_{1,1}\oplus x_{1,2}&=0, \label{eq:Cond3} \\
	&\vdots\nonumber\\
	x_{1,u_x-1}\oplus x_{1,1}\oplus\dots\oplus x_{1,n-1}&=0 \,. \nonumber
	\end{align}
	From \cref{eq:Cond1,eq:Cond2}, $x_{1,1}=x_{1,u_x}$; from \cref{eq:Cond2,eq:Cond3}, $x_{1,2}=x_{1,u_x+1}$, and so on, until we obtain $x_{1,n}=x_{1,u_x+(n{-}1) \pmod n}$ (in other words, $x_{1,j}=x_{1,(u_x-1+j)\pmod n}$ for all $j\in\{1,\dots,n\}$).  
	Let $w_x = \text{GCD}(u_x{-}1,n)$, the greatest common divisor of $u_x{-}1$ and $n$. The conditions become
	\begin{align}
	x_{1,j}=x_{1,j+w_x}=x_{1,j+2w_x}=...=x_{1,j+n-w_x},
	\end{align}
	for all $j \in\{1,\dots,w_x\}$. Repeating the above steps for all $i$, we obtain
	\begin{align}
	x_{i,j}=x_{i,j+w_x}=x_{i,j+2w_x}=...=x_{i,j+n-w_x},
	\end{align}
	for all $i \in\{1,\dots,r_x\}$, $j\in\{1,\dots,w_x\}$.
	
	From the above, we see that any error of the form $Z_{l_x}Z_{l_x+w_x}$ (where $l_x \in \{ 1, \cdots, n-w_x \}$) commutes with all stabilizer generators. Now let us consider two cases:
	\begin{itemize}
		\item \textbf{Case 1:} At least one operator of the form  $Z_{l_x}Z_{l_x+w_x}$ is not in the stabilizer.
		
		In this case, the distance $d$ of the code $C$ is at most two. Since $d \geq \min\{d_x,d_z\}$ (see \cref{Theorem:CSS}), this contradicts our assumption that both $d_x,d_z \geq 3$.
		\item  \textbf{Case 2:} All operators of the form $Z_{l_x}Z_{l_x+w_x}$ are in the stabilizer.
		
		In this case,  there exists a set of coefficients $a_1, \cdots, a_{r_z} \in \{0,1 \}$ such that $(g^{z}_1)^{a_1}\cdots (g^{z}_{r_z})^{a_{r_z}} = Z_{l_x}Z_{l_x+w_x}$, where $g^z_i$ is the $Z$-type generator corresponding to the $i\textsuperscript{th}$ row of $H_{z}$. This means that the $Z$-part of $\sigma(Z_{l_x}Z_{l_x+w_x})$ is a codeword in $C_{z}^{\perp}$. Since $C_{z}^{\perp} \subseteq C_{x}$ by the construction of CSS codes, we have that the $Z$-part of $\sigma(Z_{l_x}Z_{l_x+w_x})$ is a codeword in $C_{x}$. Because the distance of classical codes is given by the minimum Hamming weight of the codewords, we have that $d_{x} \le 2$ which contradicts our assumption that $d_{x} \ge 3$.
	\end{itemize}
	
	
\end{proof}

Although consecutive error product set $\mathcal{E}_{l,n}^P$ is distinguishable by any cyclic CSS code satisfying \cref{theo:MainTheorem}, we cannot  construct an FTEC protocol using the circuit in \cref{fig:StabFTwithAncilla} directly since the possible errors might not be in the consecutive form without qubit permutation. Moreover, permuting qubits will break the cyclic symmetry and $\mathcal{E}_{l,n}^P$ might no longer be distinguishable. In the next section, we will use \cref{theo:MainTheorem} to find a 1-flag circuit for distance-3 cyclic CSS codes that can be used in a fault-tolerant protocol satisfying \cref{Def:FaultTolerantDef}. We point out that since $p,q$ in \cref{Lemma:Lem1} are chosen to be in the set $\{ 0, \cdots, n-1 \}$, if a cyclic CSS code 
can correct errors of weight $\leq t$, then the flag circuits should be designed such that if there are $\leq t$ faults during the FTEC protocol, an error of weight $n$ cannot occur.

\section{Fault-tolerant error correction protocol for distance-3 cyclic CSS codes}
\label{sec:Protocol}%
\begin{figure*}
	\centering
	\includegraphics[width=1.0\textwidth]{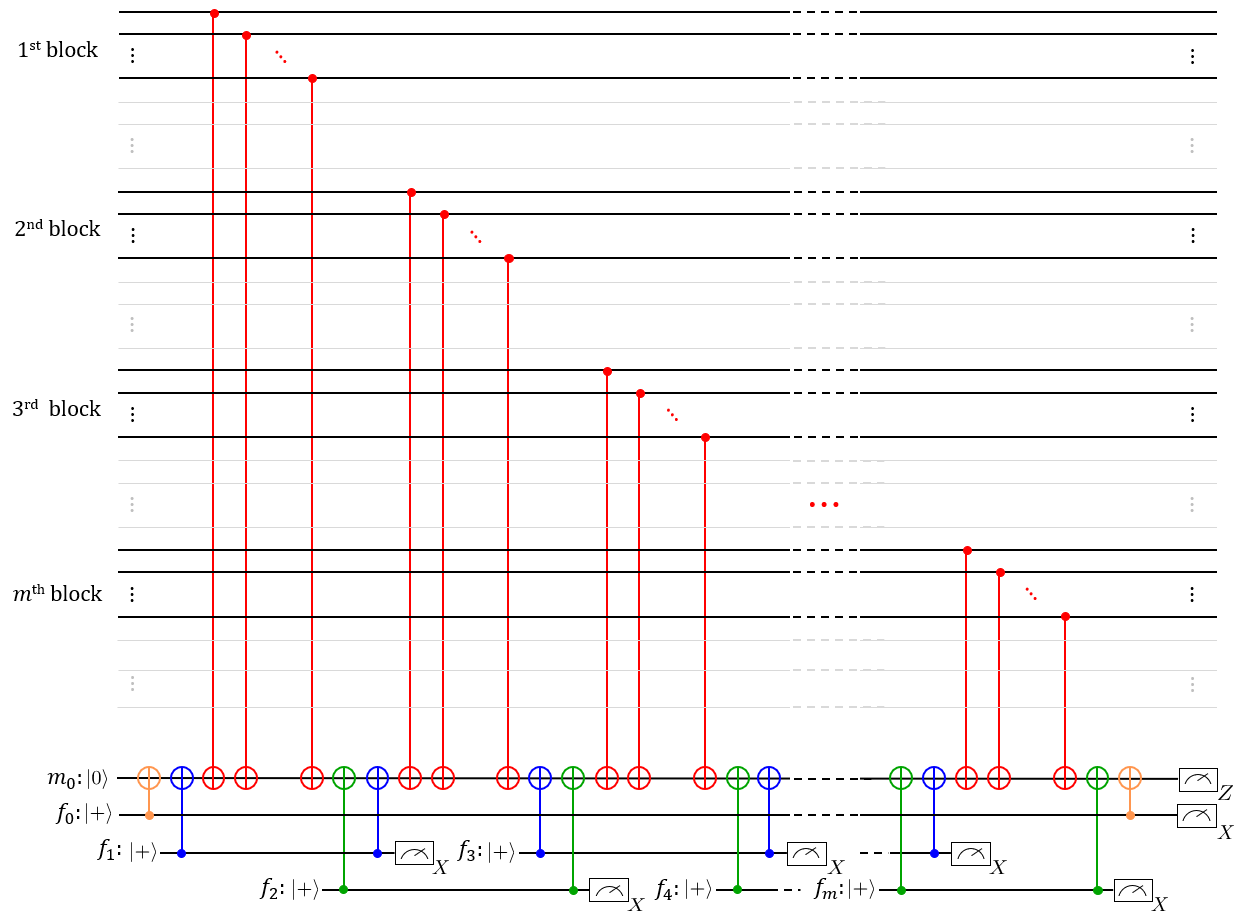}
	\caption{Illustration of a 1-flag circuit applicable to distance-3 cyclic CSS codes. The circuit measures stabilizers of the form $Z^{\otimes a_1}\otimes I^{\otimes b_1}\otimes Z^{\otimes a_2}\otimes I^{\otimes b_2}\otimes \cdots \otimes Z^{\otimes a_m}\otimes I^{\otimes b_m}$. The flag qubits are represented by the labels $f_1, \cdots , f_m$. Information from the flag outcomes along with the protocol given in \cref{sec:Protocol} enable the construction of a flag-FTEC protocol which satisfies \cref{Def:FaultTolerantDef}. (For grayscale version, \emph{red} CNOT gates are CNOT gates connecting between a data qubit and qubit $m_0$. The \emph{orange}, \emph{blue}, and \emph{green} CNOT gates have control qubits $f_0$, $f_i$ for odd $i$, and $f_i$ for even $i$, respectively.)}
	\label{fig:GeneralWflagFig}%
\end{figure*}

Fault-tolerant error correction is one of the most important building blocks for fault-tolerant quantum computation. In this section, a flag-FTEC protocol for distance-3 cyclic CSS codes is developed \footnote{Note that our protocol and circuit can also be applied to higher distance codes if we only consider correcting errors introduced by at most one fault.}. A 1-flag circuit for cyclic CSS codes of distance 3 which is required for the flag-FTEC protocol is provided in \cref{fig:GeneralWflagFig} (see \cref{Def:tFlaggedCircuitDef} for the definition of a $t$-flag circuit). Here we adapt the idea of localizing circuit faults from \cite{CR17v1}.

Suppose that the stabilizer generator being measured is of the form 
\begin{align}
P = Z^{\otimes a_1}\otimes I^{\otimes b_1}\otimes Z^{\otimes a_2}\otimes I^{\otimes b_2}\otimes \cdots \otimes Z^{\otimes a_m}\otimes I^{\otimes b_m}, \nonumber
\end{align} 
where $a_i>0$ and $b_i \geq 0$ are integers. The $i^\text{th}$ sub-block consists of $a_i$ qubits, which are from the $\sum_{j=1}^{i-1}(a_j+b_j)+1$'th qubit to the $\sum_{j=1}^{i-1}(a_j+b_j)+a_i$'th qubit.

Notice that the blue, green and orange CNOT gates in the circuit of \cref{fig:GeneralWflagFig} always come in pairs. This is to ensure that when fault-free, the circuit implements a projective measurement of the stabilizer without flagging. In what follows, we will refer to the first blue, green or orange CNOT of a pair as an \textit{opening} CNOT and the second blue, green or orange CNOT as a \textit{closing} CNOT. Given these definitions, we have the following claim:

\begin{claim}
During the measurement of $P = Z^{\otimes a_1}\otimes I^{\otimes b_1}\otimes Z^{\otimes a_2}\otimes I^{\otimes b_2}\otimes \cdots \otimes Z^{\otimes a_m}\otimes I^{\otimes b_m}$ using the circuit in \cref{fig:GeneralWflagFig}, the following can occur:
\begin{enumerate}
    
     \item If there are no faults, none of the $f_i$ ancilla qubits will flag. 
     \item  A fault at a CNOT location resulting in a $ZZ$ error is equivalent to the prior CNOT failing resulting in an $IZ$ error (here $Z$ acts on the target qubit).
     \item  Suppose that a fault occurs on one of the red CNOT's and causes a $Z$ error on the ancilla $m_0$. If the fault occurs on sub-block $a_i$ where $i\geq 1$, only the ancillas $f_0$ and $f_i$ will flag.
     \item  Suppose that a fault occurs on a blue or green CNOT. Let the control qubit be the ancilla $f_i$. If it is the opening CNOT and causes a $Z$ error on ancilla $m_0$, the ancillas $f_0$, $f_i$, and $f_{i-1}$ will flag. If it is the closing CNOT and causes a $Z$ error on the ancilla $m_0$, the ancillas $f_0$ and $f_{i+1}$ will flag. However if the fault occurs on a blue or green CNOT's at the boundary \footnote{By boundary we are referring to either the first blue CNOT after the $1^\text{th}$ sub-block or the last green CNOT after the $m^\text{th}$ sub-block.}, if the opening CNOT of $f_1$ is faulty, $f_0$ and $f_1$ will flag, and if the closing CNOT of $f_m$ is faulty, only $f_0$ will flag.
     \item  A fault occurring at an orange CNOT gate will not cause a data qubit error (since a $Z$ spreading to all qubits is equivalent to the stabilizer being measured). Furthermore, only the ancilla $f_0$ can flag in this case (depending on whether the error was of the form $IZ$ or $ZZ$ and also whether it occurred on the opening or closing orange CNOT).    
\end{enumerate}
\label{cl:PrincipalClaim}%
\end{claim}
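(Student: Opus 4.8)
The plan is to prove the claim by an exhaustive case analysis over the locations of a single fault in the circuit of \cref{fig:GeneralWflagFig}, tracking the propagation of $Z$ errors. The only propagation fact needed is the Heisenberg action of a CNOT: a $Z$ on the control is left unchanged, while a $Z$ on the target is copied onto the control and remains on the target. In the $Z$-stabilizer circuit of \cref{fig:GeneralWflagFig} the ancilla $m_0$ is the target of \emph{every} CNOT (only the control varies: a data qubit for red gates, $f_0$ for orange gates, $f_i$ with $i$ odd for blue gates, $f_i$ with $i$ even for green gates). Consequently, a $Z$ that lands on $m_0$ at some time step ends the circuit as a $Z$ on $m_0$ together with a $Z$ on the control of every CNOT that fires after that step, while a $Z$ on any other wire simply stays put. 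Item 1 is then immediate: with no faults each flag qubit $f_j$ is acted on by $m_0$ an even number of times, because the blue/green/orange gates come in opening--closing pairs, so it returns to $\ket{+}$ and does not flag.

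Item 2 follows by conjugating the two-qubit error $Z\otimes Z$ (on the control and target of a CNOT) backward through that gate: by the rule above this equals $I\otimes Z$ immediately before the gate, i.e.\ a lone $Z$ on the target $m_0$, which is precisely an $IZ$ fault of the preceding CNOT on the $m_0$ wire. Hence for items 3--5 it suffices to treat faults that deposit a single $Z$ on $m_0$; any additional $X$- or $Y$-type component sits on a CNOT control, which is always an ancilla, so it affects at most the measurement of that one flag qubit and never a data qubit. This gives the master observation: a $Z$ deposited on $m_0$ at a given time step flags exactly those flag qubits whose opening--closing pair \emph{straddles} that step (each such qubit is hit once by $m_0$; every other flag qubit is hit zero or two times). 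Since the pair for $f_0$ is the outermost one, it straddles every in-circuit step, so $f_0$ flags whenever a $Z$ reaches $m_0$; and the staggered placement visible in the figure --- the opening CNOT of $f_i$ lying just before sub-block $a_i$ but before the closing CNOT of $f_{i-1}$, and the closing CNOT of $f_i$ lying just after sub-block $a_i$ but after the opening CNOT of $f_{i+1}$ --- is exactly what makes $f_{i-1}$ (not $f_{i+1}$) the neighbour that straddles an opening-CNOT fault, and $f_{i+1}$ the neighbour that straddles a closing-CNOT fault.

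With this in hand the remaining items are read off directly. A $Z$ deposited inside sub-block $a_i$ by a red CNOT is straddled by the pairs of $f_0$ and $f_i$ only, giving item 3; a $Z$ from the opening CNOT of $f_i$ is straddled by the pairs of $f_0$, $f_{i-1}$ and $f_i$, and a $Z$ from the closing CNOT of $f_i$ by those of $f_0$ and $f_{i+1}$, giving item 4, while the boundary subcases collapse because $f_{i-1}=f_0$ when $i=1$ (so the set becomes $\{f_0,f_1\}$) and $f_{m+1}$ does not exist (so the set becomes $\{f_0\}$). Finally, a fault on an orange CNOT deposits a $Z$ on $m_0$ either between the two orange gates --- in which case it spreads to \emph{all} data qubits, which is the operator $P$ being measured and hence trivial on the code space, and is absorbed on every $f_j$ with $j\ge 1$ by that qubit's own opening--closing pair --- or after the closing orange gate, where no further CNOT fires; in both cases the only ancilla that can be affected is $f_0$ and there is no data error, which is item 5. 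State-preparation and measurement faults are even simpler: they merely flip the outcome of the single affected ancilla and never touch a data qubit.

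The step I expect to be the main obstacle is not any single estimate but the bookkeeping underlying the master observation: one must fix, from the figure, the precise temporal order in which the red, orange, blue and green CNOTs are interleaved around the sub-block boundaries --- in particular the staggering that produces the $f_{i-1}$ versus $f_{i+1}$ asymmetry --- and then verify, location by location and including the two boundaries, that the set of flag pairs straddling the fault is exactly the set claimed. The only genuinely code-dependent input, namely that a $Z$ on the entire support of $P$ equals $P$ and is therefore a stabilizer, is immediate here, so the claim needs no further ingredients; \cref{theo:MainTheorem} enters only afterwards, when the claim is combined with the distinguishability of consecutive error sets to establish fault tolerance. The $X$-type stabilizer case follows by conjugating the whole circuit with transversal Hadamard gates, which turns every $Z$ above into an $X$ and leaves the combinatorics unchanged.
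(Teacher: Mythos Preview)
The paper does not actually give a proof of this claim: it is stated as something the reader can verify directly from the circuit of \cref{fig:GeneralWflagFig}, and the text immediately moves on to use the claim. Your proposal is therefore not competing with a paper proof but rather supplying the verification the paper leaves implicit. The approach you take --- track how a single $Z$ on $m_0$ propagates through the remaining CNOTs and observe that a flag $f_j$ flips precisely when its opening/closing pair straddles the fault location --- is exactly the kind of circuit inspection the authors have in mind, and your ``master observation'' is a clean way to organize the five items uniformly rather than checking each ad hoc.

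One small slip: in your reduction for items 3--5 you write that any additional $X$- or $Y$-type component on the control ``is always an ancilla.'' That is true for the blue, green and orange gates, but for the red CNOTs the control is a \emph{data} qubit, so a $P_1\otimes P_2$ fault with $P_1\in\{X,Y,Z\}$ does leave a weight-one Pauli on the data (this is exactly the $P$ error discussed in the protocol of \cref{sec:Protocol}). This does not affect the correctness of your argument for the claim itself, since items 3--4 concern only which flags fire and a Pauli on a red-CNOT control never reaches $m_0$ or any $f_j$; but the sentence as written is inaccurate and should be amended. With that fix, your verification is sound and matches what the paper intends.
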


From the above claim, one can verify that a single fault resulting in a data qubit error $E$ with $\text{min}_Q(\text{wt}(E Q)) > 1$ where $Q \in \langle P, g_1, \cdots, g_{n-k} \rangle$ will always cause at least one flag qubit to flag (see \cref{Def:tFlaggedCircuitDef}). Thus the circuit in \cref{fig:GeneralWflagFig} is a 1-flag circuit. Note that an analogous claim can be made for $X$-type stabilizers.

Before describing the FTEC protocol, we require one more definition:
\begin{definition}{\underline{\smash{Minimum weight correction}}}

Given the syndrome $s=s(E)$ of an error $E$, we let $E_{\text{min}}(s)$ be a \emph{minimal weight correction} of $E$. 
\label{def:MinErr}%
\end{definition}

Note that many errors can lead to the same syndrome. In particular, errors corresponding to the same syndrome differ by some multiplication of stabilizers or logical operators. If error $E$ is correctable, applying $E_{\text{min}}(s)$ can correct such error as we expected. However, if $E$ is not correctable (i.e., $\text{wt}(E) > t$), applying $E_{\text{min}}(s)$ will project the codeword back to the coding subspace, but the resulting codeword may differ from the original codeword by some logical operation. This property of minimal weight correction is required so that the FTEC protocol satisfy the second FTEC condition in \cref{Def:FaultTolerantDef}.

Using \cref{theo:MainTheorem,cl:PrincipalClaim,def:MinErr}, we now describe a FTEC protocol that satisfies \cref{Def:FaultTolerantDef} for distance-3 cyclic CSS codes using a procedure adapted from \cite{CB17}. In what follows, we define $s^{(r)} = (s^{(r)}_x|s^{(r)}_z)$ to be the syndrome obtained during round $r$ (either using flag or non-flag circuits), where $s^{(r)}_x$ and $s^{(r)}_z$ are the syndromes obtained from $X$-type and $Z$-type stabilizers, respectively.

\textbf{FTEC Protocol:} 

Let $C$ be an \codepar{n,k,d} cyclic CSS code satisfying \cref{theo:MainTheorem} with stabilizer group $S = \langle g_1 , \cdots , g_{n-k} \rangle$. Let ${\cal C}(g_i)$ be the 1-flag circuit of \cref{fig:GeneralWflagFig} for stabilizer $g_i$. Repeat the syndrome measurement (measurement of all stabilizer generators) using the 1-flag circuits until one of the following is satisfied:
\begin{enumerate}
        \item If the syndrome is repeated twice in a row and there are no flags, apply $E_{\text{min}}(s^{(1)})$.
	\item If there are no flags and the syndromes $s^{(1)}$ and $s^{(2)}$ differ, repeat the syndrome measurement using non-flagged circuits. Apply the correction $E_{\text{min}}(s^{(3)})$.
        \item If $f_0$ does not flag but $f_i$ flags (with $i \ge 1$) during round one, stop. Repeat the syndrome measurement using non-flag circuits and apply $E_{\text{min}}(s^{(2)})$. If there are no flags in the first round but in round two $f_i$ flags and $f_0$ does not flag, stop. Apply  $E_{\text{min}}(s^{(1)})$.	
	\item If $f_0$ flags at round $r$ anytime during the protocol, stop and do one of the following:
	\begin{enumerate}
		\item If $f_i$ does not flag for all $i\geq1$, repeat the syndrome measurement using non-flag circuits. Apply $E_{\text{min}}(s^{(r+1)})$.
		\item If there is only one $i$ such that $f_i$ flags (with $i \ge 1$), apply $I^{\otimes c} \otimes Z^{\otimes a_{i+1}} \otimes I^{\otimes b_{i+1}} \otimes \cdots \otimes Z^{\otimes a_m} \otimes I^{\otimes b_m}$ to the data if the stabilizer being measured is a $Z$ stabilizer or $I^{\otimes c} \otimes X^{\otimes a_{i+1}} \otimes I^{\otimes b_{i+1}} \otimes \cdots \otimes X^{\otimes a_m} \otimes I^{\otimes b_m}$ if it is an $X$ stabilizer, where $c=\sum_{j=1}^i (a_j+b_j)$. Repeat the syndrome measurement using non-flag circuits yielding syndrome $s^{(r+1)} = (s^{(r+1)}_x|s^{(r+1)}_z)$. 
		\begin{enumerate}
			\item If the stabilizer being measured is a $Z$ stabilizer and there is an element $E_z$ in $\mathcal{E}^{z}_{l,n}$ where $l=n-c+b_i$ that satisfies $s(E_z)=s^{(r+1)}_x$, apply $E_z$ followed by $E_{\text{min}}(s^{(r+1)}_z)$. Otherwise, apply  $E_{\text{min}}(s^{(r+1)})$.
			\item If the stabilizer being measured is an $X$ stabilizer and there is an element $E_x$ in $\mathcal{E}^{x}_{l,n}$ where $l=n-c+b_i$ that satisfies $s(E_x)=s^{(r+1)}_z$, apply $E_x$ followed by $E_{\text{min}}(s^{(r+1)}_x)$. Otherwise, apply  $E_{\text{min}}(s^{(r+1)})$.
		\end{enumerate}
		\item If there is an $i$ such that $f_i$ and $f_{i+1}$ flag, perform the same sequence operations as in 4(b).
	\end{enumerate}	
\end{enumerate}

To see that the above protocol satisfies \cref{Def:FaultTolerantDef}, we will assume that there is at most one fault during the protocol. If a fault in any of the CNOT gates introduces a $Z$ error on ancilla $m_0$, then $f_0$ and at least one $f_i$ (with $i \geq 1$) will flag (unless the first orange CNOT introduces an error of the form $ZZ$ or the last orange CNOT introduces an error of the form $IZ$ which in both cases, there will be no data qubit error). If there is only one flag during round one, either $f_0$ or $f_i$, then the fault could either have been caused by a measurement error, idle qubit error on the ancilla $f_0$ or $f_i$, or an error on the control qubit of the CNOT gate interacting with $f_0$ or $f_i$. However in all three cases, the error could not have spread to the data. By repeating the syndrome measurement and applying $E_{\text{min}}(s^{(2)})$, both criteria of \cref{Def:FaultTolerantDef} will be satisfied. Note that if $f_i$ flags during round two, then the syndrome obtained during round one corresponds to the data qubit error (since there could not have been a measurement error giving the wrong syndrome during the first round), so correcting using $s^{(1)}$ will again satisfy \cref{Def:FaultTolerantDef}.

Next, let us consider the case where none of the $f_i$ ancillas flag. By the circuit construction, a single fault can introduce an error $E$ with $\text{wt}(E) \le 1$. If the same syndrome is repeated twice in a row, then applying $E_{\text{min}}(s^{(1)})$ can result in a data error of weight at most one. If $s^{(1)} \neq s^{(2)}$, then a fault occurred in either the first or second round. Thus repeating the syndrome measurement a third time and applying $E_{\text{min}}(s^{(3)})$ will remove the data errors or project the code back to the coding subspace. 

Next we consider the case where a fault happens on a red CNOT introducing a $Z$ error on the ancilla $m_0$ and a $P$ error on the data qubit where $P \in \{I,X,Y,Z\}$. If the fault occurs on the $i^\text{th}$ sub-block, then $f_0$ will flag and there will be only one $i \ge 1$ such that $f_i$ flags. Applying $I^{\otimes c} \otimes Z^{\otimes a_{i+1}} \otimes I^{\otimes b_{i+1}} \otimes \cdots \otimes Z^{\otimes a_m} \otimes I^{\otimes b_m}$ where $c=\sum_{j=1}^i (a_j+b_j)$ to the data if the stabilizer being measured is a $Z$ stabilizer (or $I^{\otimes c} \otimes X^{\otimes a_{i+1}} \otimes I^{\otimes b_{i+1}} \otimes \cdots \otimes X^{\otimes a_m} \otimes I^{\otimes b_m}$ if it is an $X$ stabilizer) guarantees that the resulting error is a product of $Z$-type error from $\mathcal{E}^{z}_{l,n}$ and an $X$-type error of weight at most 1 (or a product of $X$-type error from $\mathcal{E}^{x}_{l,n}$ and a $Z$-type error of weight at most 1). By \cref{theo:MainTheorem}, errors in the set $\mathcal{E}^{z}_{l,n}$ (or  $\mathcal{E}^{x}_{l,n}$) can be distinguished. Thus applying the correction in 4(b) of the protocol will remove the error if there are no input errors. However, if there is an input error, then applying $E_{\text{min}}(s^{(r+1)})$ will project the code back to the coding subspace.

Lastly, if a fault occurs on a blue or green CNOT, then from \cref{cl:PrincipalClaim} either the case in 4(b) or 4(c) will be satisfied. However in both cases, the $Z$ error will spread to the data in the same way. Hence the correction proposed in 4(c) will satisfy the fault-tolerance criteria of \cref{Def:FaultTolerantDef}.

A list of possible faults during the flag-FTEC protocol and corresponding correction procedures is given in \cref{tab:FTEC_fault} in \cref{app:FaultTable}.

\section{Fault-tolerant measurement protocol for distance-3 cyclic CSS codes} 
\label{sec:MeasProtocol}%
Besides FTEC, there are other important components for fault-tolerant computation: FT state preparation, FT measurement, and FT quantum gate implementation. In this section, we provide a flag-FT measurement protocol for distance-3 cyclic CSS codes. The measurement protocol plays an important role in fault-tolerant quantum computation on cyclic CSS codes since it can also be used as a subroutine for FT state preparation, FT quantum gate implementation, and other techniques, described later in \cref{sec:Applications}.

The flag-FT protocol provided in this section is similar to the flag-FTEC protocol in \cref{sec:Protocol} except that the idea of consecutive error correction is developed so that it is applicable not only to stabilizer measurements but also to measurements of any Pauli operator commuting with all generators. We begin by introducing the definition of fault-tolerant non-destructive measurement adapted from \cite{AGP06} as follows:
\begin{definition}{\underline{\smash{Fault-tolerant non-destructive measure-}} \underline{\smash{ment}}}
	
	For $t = \lfloor (d-1)/2\rfloor$, a non-destructive measurement protocol using a distance-$d$ stabilizer code $C$ is $t$-\emph{fault-tolerant} if the following two conditions are satisfied:
	\begin{enumerate}
		\item For an input codeword with error of weight $v_{1}$, if $v_{2}$ faults occur during the measurement protocol with $v_{1} + v_{2} \le t$, ideally decoding the output state after measurement gives the same state as ideally decoding the input state and then performing ideal non-destructive measurement. The result obtained from measuring the input codeword is the same as that of an ideal measurement on the ideally-decoded input state.
		\item For an input codeword with error of weight $v_{1}$, if $v_{2}$ faults occur during the measurement protocol with $v_{1} + v_{2} \le t$, the output state differs from a codeword by an error of at most weight $v_{1}+v_{2}$.
	\end{enumerate}
	\label{Def:FTmeasDef}%
\end{definition}
\noindent (Here we need to modify \cref{Def:FTmeasDef} from the usual definition of fault-tolerant (destructive) measurement since we would like to obtain both measurement result and post-measurement state. These ingredients are important in the applications discussed in \cref{sec:Applications}.)

Suppose that the operator being measured $P$ commutes with all generators and is of the form
\begin{align}
P = P_1^{\otimes a_1} \otimes I^{\otimes b_1} \otimes P_2^{\otimes a_2} \otimes I^{\otimes b_2} \otimes \cdots \otimes P_m^{\otimes a_m} \otimes I^{\otimes b_m},
\end{align}
where $a_i>0$ and $b_i \geq 0$ are integers and $P_i \in \{X,Y,Z\}$. The $i^\text{th}$ sub-block consists of $a_i$ qubits acted on by $P_i^{\otimes a_i}$. A 1-flag circuit for operator measurement is similar to the circuit given in \cref{fig:GeneralWflagFig}, while the measurements of $Z$, $X$, and $Y$ operators correspond to CNOT gates, the gates shown in \cref{fig:XNOTgates}, and the gates shown in \cref{fig:YNOTgates}. With the slight modification where CNOT gates are replaced by the gates for measuring $P_i \in \{X,Y,Z\}$, one can verify that \cref{cl:PrincipalClaim} is also applicable in this setting.

\begin{figure}[htbp]
	\centering
	\includegraphics[width=0.3\textwidth]{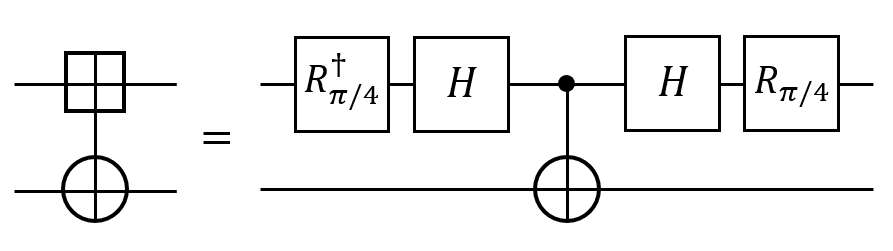}
	\caption{Quantum gates for measuring $Y$ operator where $R_{\pi/4} = \text{diag}(1,i)$.}
	\label{fig:YNOTgates}%
\end{figure}

Using \cref{theo:MainTheorem} and \cref{cl:PrincipalClaim}, we now describe a flag-FT measurement protocol that satisfies \cref{Def:FTmeasDef} for distance-3 cyclic CSS codes. Here we define $m^{(r_1)}$ to be the measurement result obtained from operator measurement (using either flag or non-flag circuits) during round $r_1$, and define $s^{(r_2)}=(s^{(r_2)}_x|s^{(r_2)}_z)$ to be the syndrome obtained from syndrome measurement (using either flag or non-flag circuits) during round $r_2$ of error correction. The protocol is as follows:

\textbf{Flag-FT Operator Measurement Protocol:}

Let $C$ be an \codepar{n,k,d} cyclic CSS code satisfying \cref{theo:MainTheorem}. Let ${\cal C}(P)$ be the 1-flag circuit of \cref{fig:GeneralWflagFig} for measuring a Pauli operator $P$ of the form
\begin{align}
P = P_1^{\otimes a_1} \otimes I^{\otimes b_1} \otimes P_2^{\otimes a_2} \otimes I^{\otimes b_2} \otimes \cdots \otimes P_m^{\otimes a_m} \otimes I^{\otimes b_m},
\end{align} 
where $P$ commutes with all generators of $C$. Repeat the operator measurement using the 1-flag circuits until one of the following is satisfied:

\begin{enumerate}
	\item If first two operator measurement results coincide $(m^{(1)} = m^{(2)})$ and there are no flags, perform the syndrome measurement twice using 1-flag circuits for flag-FTEC.
	\begin{enumerate}
		\item If $s^{(1)} = s^{(2)} = 0$ and there are no flags during both syndrome measurement rounds, output $m^{(1)}$.
		
		\item If $s^{(1)} \neq s^{(2)}$ or at least one flag qubit flags during the syndrome measurement, apply the correction described by the flag-FTEC protocol of \cref{sec:Protocol}, then output $m^{(1)}$.
		
		\item If $s^{(1)} = s^{(2)} \neq 0$ and there are no flags during the syndrome measurement, apply the correction $E_\text{min}(s^{(1)})$. Repeat the operator measurement using a non-flag circuit, then output $m^{(3)}$. 
	\end{enumerate}
	
	\item If $m^{(1)} \neq m^{(2)}$ and there are no flags, perform one syndrome measurement round using non-flag circuits for error correction and apply $E_\text{min}(s^{(1)})$. Repeat the operator measurement using a non-flag circuit, then output $m^{(3)}$. 
	
	\item If $f_0$ does not flag but $f_i$ flags (with $i \ge 1$) during round one, stop. Repeat the operator measurement using a non-flag circuit then output $m^{(2)}$. If there are no flags during round one but $f_i$ flags and $f_0$ does not flag during round two, output $m^{(1)}$.
	
	\item If $f_0$ flags at round $r_1$ anytime during the protocol, stop and do one of the followings:
	\begin{enumerate}
		\item If $f_i$ does not flag for all $i \geq 1$, repeat the operator measurement using a non-flag circuit and output $m^{(r_1+1)}$.
		
		\item If there is only one $i$ such that $f_i$ flags (with $i \geq 1$), apply $I^{\otimes c} \otimes P_{i+1}^{\otimes a_{i+1}} \otimes I^{\otimes b_{i+1}} \otimes \cdots \otimes P_m^{\otimes a_m} \otimes I^{\otimes b_m}$ to the data, where $c = \sum_{j=1}^i a_j+b_j$. Perform the syndrome measurement using non-flag circuits for error correction yielding syndrome $s^{(r_2)} = (s^{(r_2)}_x|s^{(r_2)}_z)$. 
		\begin{enumerate}
			\item If $P_i=Z$, apply $E_z \in \mathcal{E}^z_{l,n}$ that satisfies $s(E_z) = s^{(r_2)}_x$  where $l=n-c+b_i$, followed by $E_\text{min}(s^{(r_2)}_z)$.
			\item If $P_i=X$, apply $E_x \in \mathcal{E}^x_{l,n}$ that satisfies $s(E_x) = s^{(r_2)}_z$  where $l=n-c+b_i$, followed by $E_\text{min}(s^{(r_2)}_x)$.
			\item If $P_i=Y$, apply $E \in \mathcal{E}^P_{l,n}$ that satisfies $s(E) = s^{(r_2)}$ where $l=n-c+b_i$.
		\end{enumerate}
		Afterwards, repeat the operator measurement using a non-flag circuit, then output $m^{(r_1+1)}$.
		
		\item If there is an $i$ such that $f_i$ and $f_{i+1}$ flag, perform the same sequence of operations as in 4(b).
	\end{enumerate}
	
\end{enumerate}

To see that \cref{Def:FTmeasDef} is satisfied, we will assume that the weight of an input error $v_1$ and the number of faults during the protocol  $v_2$ satisfy $v_1+v_2 \leq 1$. Similar to the FTEC protocol, $f_0$ and at least one $f_i$ (with $i\geq 1$) will flag whenever a fault in any CNOT gate causes $Z$ error on $m_0$. If there is no flags, a single fault can introduce error of weight at most one. If the measurement result is repeated twice, then there is no fault in the circuits. However, the measurement result might be incorrect due to the input error. By performing full syndrome measurement twice with flag circuits, we can determine from $s^{(1)}$ and $s^{(2)}$ whether these is no input error, there is a fault during syndrome measurement, or there is an input error of weight 1. The procedure in 1(a), 1(b), and 1(c) can correct possible errors and output the right operator measurement result with corresponding codeword after projective measurement.

Now let us consider the case that there is no flags but $m^{(1)} \neq m^{(2)}$. This is the case where a fault occurred in either the first or second round. Therefore, performing error correction and repeating the operator measurement can give the correct result.

Next, consider the case that there is only one flag, either $f_0$ or $f_i$ with $i\geq 1$. The fault could be a measurement error, idle qubit error on the ancilla $f_0$ or $f_i$, or an error on the control qubit of the CNOT gate interacting with $f_0$ or $f_i$. Repeating the operator measurement can give the right result. Note that if $f_i$ flags during round two, then the result obtained during round one corresponds to the right outcome.

Now let us consider the case where a fault happens on a red CNOT introducing a $Z$ error on the ancilla $m_0$ and a $\tilde{P}$ error on the data qubit where $\tilde{P} \in \{I,X,Y,Z\}$. If the fault occurs on the $i^\text{th}$ sub-block, then $f_0$ and only one $f_i$ with $i \ge 1$ will flag. Applying $I^{\otimes c} \otimes P_{i+1}^{\otimes a_{i+1}} \otimes I^{\otimes b_{i+1}} \otimes \cdots \otimes P_m^{\otimes a_m} \otimes I^{\otimes b_m}$ to the data guarantees that the resulting error is in the form $I^{\otimes c-a_i-b_i} \otimes I^{\otimes c-1}\otimes \tilde{P} \otimes P_i^{\otimes a_i-c} \otimes I^{\otimes b_i} \otimes I^{\otimes n-c}$ (where $c = \sum_{j=1}^i a_j+b_j$). If $P_i=Z$ (or $P_i=X$), the resulting error is a product of consecutive error in $\mathcal{E}^z_{l,n}$ (or $\mathcal{E}^x_{l,n}$) and $X$-type error (or $Z$-type error) of weight one, where $l=n-c+b_i$. If $P_i=Y$, the resulting error is a consecutive error in $\mathcal{E}^P_{l,n}$. By \cref{theo:MainTheorem}, errors in  $\mathcal{E}^x_{l,n}$, $\mathcal{E}^z_{l,n}$, and $\mathcal{E}^P_{l,n}$ are distinguishable. Therefore, performing a full syndrome measurement followed by appropriate error correction as in 4(b) will remove the error, and repeating the operator measurement gives the correct outcome. The case that a fault occurs on a blue or green CNOT corresponds to either 4(b) or 4(c), and the same correction procedure can be applied.

A list of possible faults during the flag-FT operator measurement protocol and corresponding correction procedures is given in \cref{tab:FTM_fault} in \cref{app:FaultTable}.

The flag-FT measurement protocol described above is for a measurement of an operator commuting with all generators which acts in one code block. Surprisingly, the protocol also works for an operator acting on two or more code blocks. The measurement of such operator can be done by treating parts of the operator acting on different code blocks as operators from different sub-blocks. For example, let $C_p$ and $C_q$ be cyclic CSS codes of distance 3 satisfying \cref{theo:MainTheorem}, and let $P$ and $Q$ be Pauli operators acting on $C_p$ and $C_q$, respectively. The measurement of $P\otimes Q$ on the code $C_p\otimes C_q$ can be done by using a 1-flag circuit given in \cref{fig:GeneralWflagFig}, where $P$ and $Q$ are treated as operators from the different sub-blocks. Observe that if $f_0$ flags and at least one $f_i$ flags (the 4(b) or 4(c) case), the resulting error after appropriate operation will become a consecutive error on either first or second code blocks. Since $C_p$ and $C_q$ are both cyclic, we can determine the error by performing subsequent syndrome measurement on only $C_p$ or $C_q$, depending on the sub-block in which the fault occurs. After that, the correct measurement result can be obtained by a subsequent operator measurement.

\section{Applications of fault-tolerant operator measurement protocol}
\label{sec:Applications}%
A measurement of an operator commuting with all generators can be used as a subroutine in numerous quantum information processing techniques such as state preparation and quantum gate implementation. Since the fault-tolerant measurement protocol described in \cref{sec:MeasProtocol} is applicable on two or more code blocks, information processing between code blocks is possible. In this section we briefly describe some important techniques which make fault-tolerant computation on cyclic CSS codes possible, including logical EPR state preparation, teleportation, and quantum computation on logical qubits.

Let us consider an EPR state $\frac{|00\rangle+|11\rangle}{\sqrt{2}}$. This is a +1 eigenstate of operators $X \otimes X$ and $Z \otimes Z$. Let $C_p$ and $C_q$ be \codepar{n_1,k_1,d_1} and \codepar{n_2,k_2,d_2} cyclic CSS codes satisfying \cref{theo:MainTheorem} with stabilizer generating sets $\{g_{i_1}^p\}$ and $\{g_{i_2}^q\}$, respectively. Suppose that we want to prepare a state
\begin{align}
\frac{|\bar{0}\rangle_{p,i}|\bar{0}\rangle_{q,j}+|\bar{1}\rangle_{p,i}|\bar{1}\rangle_{q,j}}{\sqrt{2}}, \label{eqn:EPR}%
\end{align}
which is an EPR state between the $i^\text{th}$ logical qubit of $C_p$ and the $j^\text{th}$ logical qubit of $C_q$. This can be done by performing projective measurements with respect to stabilizer generators $\{g_{i_1}^p\otimes I, I \otimes g_{i_2}^q\}$ and logical operators $\bar{X}_{p,i} \otimes \bar{X}_{q,j}$ and $\bar{Z}_{p,i} \otimes \bar{Z}_{q,j}$ on a totally mixed state, where $\bar{X}_{p,i}$ and $\bar{X}_{q,j}$ (or $\bar{Z}_{p,i}$ and $\bar{Z}_{q,j}$) are logical $X$ (or logical $Z$) operators on $i^\text{th}$ logical qubit of $C_p$ and $j^\text{th}$ logical qubit of $C_q$, respectively. Since the measurement protocol described in \cref{sec:MeasProtocol} is a fault-tolerant protocol, the state in \cref{eqn:EPR} can be prepared fault-tolerantly.

In conventional quantum teleportation, an EPR state and Bell measurement are required. Here we will examine a process to fault-tolerantly perform quantum teleportation of logical data between two code blocks. The scheme for logical qubit teleportation is shown in \cref{fig:qubitTP}. Suppose that we would like to teleport the $i^\text{th}$ logical qubit of $C_p$ to the $j^\text{th}$ logical qubit of $C_q$, first an EPR state $\frac{|\bar{0}\rangle_{q,j}|\bar{0}\rangle_{q,j}+|\bar{1}\rangle_{q,j}|\bar{1}\rangle_{q,j}}{\sqrt{2}}$ prepared on $C_q\otimes C_q$ is required. The logical qubit teleportation can be done by performing a Bell measurement with respect to $\bar{X}_{p,i}\otimes\bar{X}_{q,j}$ and $\bar{Z}_{p,i}\otimes\bar{Z}_{q,j}$ between $C_p$ and the first block of $C_q$. The teleported logical qubit can be obtained in the second block of $C_q$ by operating an appropriate logical Pauli operator $\bar{P}_{q,j}$ depending on the Bell measurement result. Note that the Bell measurement can be done fault-tolerantly using the measurement protocol in \cref{sec:MeasProtocol}, and logical Pauli operators are transversal (therefore, fault tolerant). Thus, fault-tolerant teleportation between two code blocks can be achieved.

Now let us consider fault-tolerant computation on cyclic CSS codes. It is known that for any error correcting code, by the Eastin-Knill theorem \cite{EK09}, at least one logical gate in a universal gate set cannot be implemented transversely. For such gates, other fault-tolerant techniques must be performed, which can require a significant amount of resources. Fortunately, fault-tolerant implementations of logical Clifford gates on distance-3 cyclic CSS codes can be achieved via quantum gate teleportation (see \cite{GC99} for the details of quantum gate teleportation). For example, suppose that we would like to perform a logical Hadamard gate $\bar{H}_i$ on the code $C_p$. This can be achieved by preparing a codeword which is an eigenstate of $\bar{X}_i\otimes\bar{Z}_i$ and $\bar{Z}_i\otimes\bar{X}_i$ on $C_p \otimes C_p$, performing logical qubit teleportation, and operating a logical Pauli operator $\bar{H}_{p,i}\bar{P}_{p,i}\bar{H}_{p,i}$ depending on the result from the Bell measurement in qubit teleportation as illustrated in \cref{fig:gateTP}. Also, logical $R_{\pi/4} = \text{diag(1,i)}$ and logical CNOT gates on any logical qubits can be performed in a similar way. The Clifford group can be generated by $\{H,R_{\pi/4},\text{CNOT}\}$ \cite{CRSS97,Gottesman98b}. Thus, our scheme is applicable to any Clifford operation. It is known that universal quantum computation can be achieved by Clifford gates and any other gate not in the Clifford group \cite{NRS01}. However, performing logical non-Clifford gates will require different techniques such as the ones presented in \cite{CC18}.

\begin{figure}[htbp]
	\centering
	\begin{subfigure}{0.45\textwidth}
		\includegraphics[width=\textwidth]{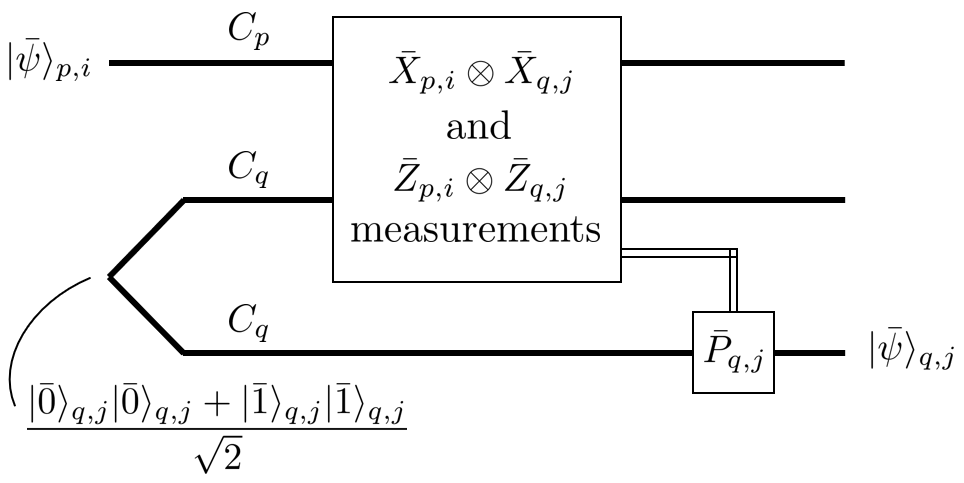}
		\caption{}
		\label{fig:qubitTP}%
	\end{subfigure}	
	\begin{subfigure}{0.49\textwidth}
		\includegraphics[width=\textwidth]{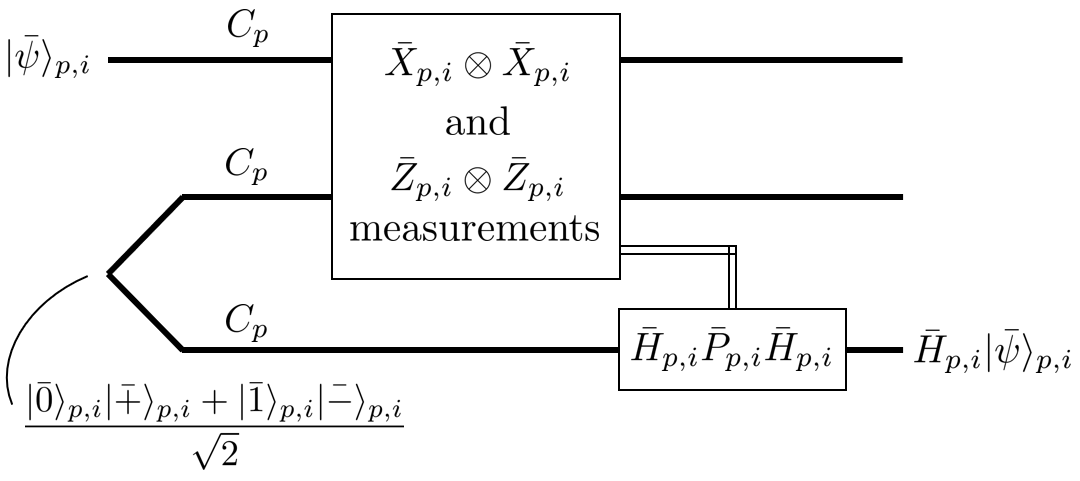}
		\caption{}
		\label{fig:gateTP}%
	\end{subfigure}
	\caption{Schemes for teleportation and Clifford gate implementation on cyclic CSS codes. A bold line represents a block of code, while a double line represents classical information. In \cref{fig:qubitTP}, the $i^\text{th}$ logical qubit of $C_p$ is teleported to the $j^\text{th}$ logical qubit of $C_q$. In \cref{fig:gateTP}, a logical Hadamard gate $\bar{H}_i$ is performed on the $i^\text{th}$ logical qubit of $C_p$ via quantum gate teleportation.}
	\label{fig:TPScheme}%
\end{figure}

\section{Examples of cyclic CSS codes}
\label{sec:ExCyclicCSS}%

In this section, some examples of cyclic CSS codes satisfying \cref{theo:MainTheorem} are given. A first example is the \codepar{7,1,3} quantum Hamming code. This code is constructed from a classical [7,4,3] Hamming code (with $C_x=C_z$). A check polynomial of the [7,4,3] Hamming code in cyclic form is
\begin{align}
h(x) = 1+x^2+x^3+x^4.
\end{align}

In fact, any classical Hamming code can be made cyclic \cite{MS77}. Thus, any CSS code constructed from a classical $[2^r-1,2^r-1-r,3]$ Hamming code with $C_x=C_z$ satisfies \cref{theo:MainTheorem}, and can be used in the flag-FTEC protocol and the flag-FT measurement protocol described in this work. 

Another example of cyclic CSS codes satisfying \cref{theo:MainTheorem} is the \codepar{30,14,3} code constructed from a classical [30,22,3] cyclic code with a check polynomial 
\begin{align}
h(x) = 1+x^2+x^4+x^6+x^{10}+x^{14}+x^{16}+x^{22}.
\end{align}
\begin{table*}[htbp]
	\begin{center}
		\begin{tabular}{| c | c || c | c |}
			\hline
			$\bar{X}_1$\rule{0pt}{2.5ex}  & $X_{1}X_{11}X_{21}$ & $\bar{Z}_1$ & $Z_{1}Z_{11}Z_{21}$\\
			\hline
			$\bar{X}_2$\rule{0pt}{2.5ex}  & $X_{2}X_{12}X_{22}$ & $\bar{Z}_2$ & $Z_{2}Z_{12}Z_{22}$\\
			\hline
			$\bar{X}_3$\rule{0pt}{2.5ex}  & $X_{3}X_{13}X_{23}$ & $\bar{Z}_3$ & $Z_{3}Z_{13}Z_{23}$\\
			\hline
			$\bar{X}_4$\rule{0pt}{2.5ex}  & $X_{4}X_{14}X_{24}$ & $\bar{Z}_4$ & $Z_{4}Z_{14}Z_{24}$\\
			\hline
			$\bar{X}_5$\rule{0pt}{2.5ex}  & $X_{5}X_{15}X_{25}$ & $\bar{Z}_5$ & $Z_{5}Z_{15}Z_{25}$\\
			\hline
			$\bar{X}_6$\rule{0pt}{2.5ex}  & $X_{6}X_{16}X_{26}$ & $\bar{Z}_6$ & $Z_{6}Z_{16}Z_{26}$\\
			\hline
			$\bar{X}_7$\rule{0pt}{2.5ex}  & $X_{7}X_{17}X_{27}$ & $\bar{Z}_7$ & $Z_{7}Z_{17}Z_{27}$\\
			\hline
			$\bar{X}_8$\rule{0pt}{2.5ex}  & $X_{8}X_{18}X_{28}$ & $\bar{Z}_8$ & $Z_{8}Z_{18}Z_{28}$\\
			\hline
			$\bar{X}_9$\rule{0pt}{2.5ex}  & $X_{9}X_{19}X_{29}$ & $\bar{Z}_9$ & $Z_{9}Z_{19}Z_{29}$\\
			\hline
			$\bar{X}_{10}$\rule{0pt}{2.5ex}  & $X_{10}X_{20}X_{30}$ & $\bar{Z}_{10}$ & $Z_{10}Z_{20}Z_{30}$\\
			\hline
			$\bar{X}_{11}$\rule{0pt}{2.5ex}  & $X_{1}X_{7}X_{9}X_{11}X_{17}X_{19}$ & $\bar{Z}_{11}$ & $Z_{11}Z_{17}Z_{19}Z_{21}Z_{27}Z_{29}$\\
			\hline
			$\bar{X}_{12}$\rule{0pt}{2.5ex}  & $X_{2}X_{8}X_{10}X_{12}X_{18}X_{20}$ & $\bar{Z}_{12}$ & $Z_{12}Z_{18}Z_{20}Z_{22}Z_{28}Z_{30}$\\
			\hline
			$\bar{X}_{13}$\rule{0pt}{2.5ex}  & $X_{11}X_{17}X_{19}X_{21}X_{27}X_{29}$ & $\bar{Z}_{13}$ & $Z_{1}Z_{7}Z_{9}Z_{11}Z_{17}Z_{19}$\\
			\hline
			$\bar{X}_{14}$\rule{0pt}{2.5ex}  & $X_{12}X_{18}X_{20}X_{22}X_{28}X_{30}$ & $\bar{Z}_{14}$ & $Z_{2}Z_{8}Z_{10}Z_{12}Z_{18}Z_{20}$\\
			\hline
		\end{tabular}
		\caption{A choice of logical operators for the \codepar{30,14,3} code.}
		\label{tab:logicalop}%
	\end{center}
\end{table*}
The [30,22,3] code and other classical codes satisfying $C^\perp \subseteq C$ are given in Table 1 of \cite{QuantumCodesCyclic}. (A method of finding the check polynomial of a classical cyclic code is discussed in \cite{MS77}.) One possible choice of logical operators for the \codepar{30,14,3} code is given in \cref{tab:logicalop}. The advantages of the \codepar{30,14,3} code are that its encoding rate is high ($k/n = 14/30$), and the logical operators of the first ten logical qubits have a simple form, which make them easily accessible. 
 
\section{Discussion and conclusion}
\label{sec:Discussion}%
In this work we used the symmetries of CSS codes constructed from classical cyclic codes to prove that errors written in consecutive form (as in \cref{def:Consecutive}) can be distinguished. From these properties, we can obtain a 1-flag circuit along with a flag-FTEC protocol which satisfies the fault-tolerance criteria of \cref{Def:FaultTolerantDef} when there is at most one fault. The 1-flag circuit requires only four ancilla qubits. This number does not grow as the block length gets larger, making our protocol advantageous in the implementation where resources are limited. We note that not all cyclic CSS codes are Hamming codes and therefore the methods in \cite{CR17v1} (which apply to perfect codes) cannot be directly applied, thus providing further motivation for our work. 

In general, cyclic CSS codes do not satisfy the sufficient condition required for flag fault-tolerance presented in \cite{CB17} (one example is the family of Hamming codes which can be made cyclic). Nevertheless, using the techniques presented in this paper, a flag-FTEC protocol can still be achieved.  

Furthermore, we have shown how logical Pauli operators of cyclic CSS codes can be fault-tolerantly measured using the flag techniques discussed in \cref{sec:MeasProtocol}. The flag-FT operator measurement protocol satisfies the fault-tolerance criteria of \cref{Def:FTmeasDef} when there is at most one fault. We then showed in \cref{sec:Applications} how one can fault-tolerantly perform quantum gate teleportation to implement logical Clifford operators on any given logical qubit for codes which encode multiple logical qubits. Examples of cyclic CSS codes with large encoding rates are provided in \cref{sec:ExCyclicCSS}.

Note that for all CSS codes, the stabilizer generators being measured are of the form $I^{\otimes n-m} \otimes X^{\otimes m}$ or $I^{\otimes n-m} \otimes Z^{\otimes m}$ up to qubit permutations. Thus data qubit errors arising from faulty CNOT gates will be expressed in consecutive form. The errors of this form are distinguishable iff the sub-matrices of the $X$ and $Z$ stabilizers  satisfy \cref{Lemma:Lem1}. In our work, we use the symmetry of the cyclic codes to simplify \cref{Lemma:Lem1} into \cref{Lemma:Lem3}. We believe that \cref{Lemma:Lem1} can be simplified by using symmetries found in other families of quantum codes. With appropriate $t$-flag circuits and operations depending on the flag measurement outcome, this may lead to new flag fault-tolerant protocols.

Another interesting avenue is finding non-cyclic quantum codes for which a version of \cref{theo:MainTheorem} can be applied. We note that for such codes, the same 1-flag circuit as in \cref{fig:GeneralWflagFig} along with the flag-FTEC protocol of \cref{sec:Protocol} and the flag-FT measurement protocol of \cref{sec:MeasProtocol} can be used. The reason is that the key property used by these schemes is based on the distinguishability of consecutive errors.

Note that there are quantum cyclic codes which are not CSS codes for which flag fault-tolerant schemes are still possible. For instance, a flag-FTEC protocol for the \codepar{5,1,3} code was devised in \cite{CR17v1}. We believe that it could be interesting to generalize the ideas presented in this work to non-CSS cyclic quantum codes. However, we leave this problem for future work. 

The flag fault-tolerant protocols for cyclic CSS codes presented in this work are based on the assumption that the qubit measurement and state preparation must be fast since we reuse some flag qubits in the protocols (as we can see in \cref{fig:GeneralWflagFig}). If we do not reuse flag qubits, however, the number of required ancillas will be $m+2$ for an operator being measured of the form $P = P_1^{\otimes a_1} \otimes I^{\otimes b_1} \otimes P_2^{\otimes a_2} \otimes I^{\otimes b_2} \otimes \cdots \otimes P_m^{\otimes a_m} \otimes I^{\otimes b_m}$ instead of $4$.

One important feature of flag fault-tolerant protocols is that the number of required ancillas is very small compared to other fault-tolerant schemes. We believe that if fewer ancillas are required, the accuracy threshold will increase since the number of locations will decrease in total. However, we should point out that subsequent syndrome measurements are also required in a flag fault-tolerant protocol and may increase the total number of locations in the protocol. The answer of whether the accuracy threshold for a flag fault-tolerant protocol is greater or smaller compared to other fault-tolerant schemes is still unknown.

Lastly, we point out that cyclic CSS codes which satisfy the condition in \cref{theo:MainTheorem} are not limited to distance-3 codes. Therefore, interesting future work would be to use the methods of \cite{CB17} to obtain flag fault-tolerant schemes for higher-distance codes. In particular, the main challenge stems from finding $t$-flag circuits as in \cref{fig:GeneralWflagFig} for $t > 1$.

\section{Acknowledgements}
T.T. acknowledges the support of The Queen Sirikit Scholarship under The Royal Patronage of Her Majesty Queen Sirikit of Thailand. C.C. acknowledges the support of NSERC through the PGS D scholarship. D.L. is supported by an NSERC Discovery grant and a CIFAR research grant via the Quantum Information Science program.

\newpage 

\bibliographystyle{ieeetr}
\bibliography{bibtex_chamberland}

\clearpage
\appendix

\section{Proof of the lemmas}
\label{app:LemmaProofs}%

\textbf{Proof of \cref{Lemma:Lem1}}: We will prove that $\mathcal{E}^z_{0,n}$ is distinguishable by $C$ iff for all $p,q\in\{0,1,...,n-1\}$ such that $p>q$, there exists $i\in\{1,\dots,r_x\}$ such that $x_{i,n-p+1}\oplus\dots\oplus x_{i,n-q}=1$. Consider errors $E_{p}=I^{\otimes n-{p}} \otimes Z^{\otimes{p}} $ and $E_{q}=I^{\otimes n-{q}} \otimes Z^{\otimes{q}}$ where $p,q\in\{0,1,\dots,n-1\}, p>q$. Let $s(E_{p}),s(E_{q})\in\mathbb{Z}_2^r$ be error syndromes corresponding to errors $E_{p}$ and $E_{q}$, respectively. By \cref{Def:DisErr}, $E_{p}$ and $E_{q}$ are distinguishable by $C$ iff $s(E_{p})\neq s(E_{q})$, i.e., there exists $i \in \{1,2, \cdots, r_x \}$ such that $s(E_{p})_i\neq s(E_{q})_i$ (here $i$ corresponds to the $i^{\text{th}}$ component of $s(E_{p})$ and $S(E_{q})$). From the parity check matrix $H_x$, the $i^\text{th}$ component of $s(E_{p})$ and $s(E_{q})$ is given by 
\begin{align}
s(E_{p})_i &= x_{i,n-p+1}\oplus x_{i,n-p+2}\oplus\cdots\oplus x_{i,n}, 
\label{eq:App1} \\
s(E_{q})_i &= x_{i,n-q+1}\oplus x_{i,n-q+2}\oplus\cdots\oplus x_{i,n}.
\label{eq:App2}%
\end{align}

From \cref{eq:App1,eq:App2}, we have that
\begin{align}
s(E_{p})_i\neq s(E_{q})_i &\Leftrightarrow s(E_{p})_i\oplus s(E_{q})_i = 1 \nonumber\\
&\Leftrightarrow x_{i,n-p+1}\oplus\cdots\oplus x_{i,n-q}=1.
\end{align}
Thus, $\mathcal{E}^z_{0,n}$ is distinguishable by $C$ iff for all $p,q\in\{0,1,...,n-1\}$ such that $p>q$, there exists $i \in \{ 1,2, \cdots, r_x \}$ such that
\begin{align}
x_{i,n-p+1}\oplus\dots\oplus x_{i,n-q}=1.
\end{align}

The proof of the statement for $\mathcal{E}^x_{0,n}$ is similar.

Now we will proof that $\mathcal{E}^P_{0,n}$ is distinguishable by $C$ iff both $\mathcal{E}^z_{0,n}$ and $\mathcal{E}^x_{0,n}$ are distinguishable by $C$. Let $X_p = I^{\otimes n-p} \otimes X^{\otimes p}$ and $Z_q = I^{\otimes n-q} \otimes Z^{\otimes q}$, where $p,q \in \{0,\dots,n-1\}$. Observe that any element of $\mathcal{E}^P_{0,n}$ is of the form $E_{p,q} = X_p \cdot Z_q$ where $X_p \in \mathcal{E}^x_{0,n}$ and $Z_q \in \mathcal{E}^z_{0,n}$. The syndrome of $E_{p,q}$ is $s(E_{p,q}) = (s(X_p)|s(Z_q))$. If $\mathcal{E}^P_{0,n}$ is distinguishable by $C$, i.e., $s(E_{p_1,q_1}) \neq s(E_{p_2,q_2})$ for all choices of $p_1,p_2,q_1,q_2$ such that $(p_1,q_1) \neq (p_2,q_2)$, then we have that any pair of $X_{p_1}$ and $X_{p_2}$ and any pair of $Z_{q_1}$ and $Z_{q_2}$ are distinguishable. Conversely, if any pair of $X_{p_1}$ and $X_{p_2}$ and any pair of $Z_{q_1}$ and $Z_{q_2}$ are distinguishable, then any pair of $E_{p_1,q_1}$ and $E_{p_2,q_2}$ will have different syndromes. This implies Statement 3.\\

\textbf{Proof of \cref{Lemma:Cyclic_gen}}: Suppose that the stabilizer group of $C$ can be generated by $\{g_1,g_2,\dots,g_{n-k}\}$. Since $C$ is a CSS code, we will first assume that the generators $g_i$'s are either $X$-type or $Z$-type, denoted as $g^x_i$ or $g^z_i$. Let $H_x$ and $H_z$ be $X$ and $Z$ stabilizer matrices of the code $C$ in symplectic representation, and let $C^\perp_x$ and $C^\perp_z$ be the classical codes generated by $H_x$ and $H_z$, respectively. Observe that any element of $C$ in symplectic representation is of the form $(x|z)$ where $x \in C^\perp_x$ and $z \in C^\perp_z$. For any choice of $l \in \{0,1,\dots,n-1\}$, let $\tilde{H}_x$ (or $\tilde{H}_z$) be the parity check matrix corresponding to $\mathcal{L}(g^x_{i},l)$'s (or $\mathcal{L}(g^z_{i},l)$'s). We find that the code $\tilde{C}_x^\perp$ generated by $\tilde{H}_x$ (or $\tilde{C}_z^\perp$ generated by $\tilde{H}_z$) differs from $C_x^\perp$ (or $C_z^\perp$) by an $l$-step left cyclic permutation. However, since $C_x^\perp$ and $C_z^\perp$ are cyclic codes, we have that $\tilde{C}_x^\perp = C_x^\perp$ and $\tilde{C}_z^\perp = C_z^\perp$. Therefore, $\{\mathcal{L}(g^x_{1},l),\dots,\mathcal{L}(g^x_{r_x},l),\mathcal{L}(g^z_{1},l),\dots,\mathcal{L}(g^z_{r_z},l)\}$ and $\{g^x_{1},\dots,g^x_{r_x},g^z_{1},\dots,g^z_{r_z}\}$ generate the same stabilizer group for any $l \in \{0,1,\dots,n-1\}$. 

In general, some generators of the stabilizer group of $C$ might be neither $X$-type nor $Z$-type. The following transformations of the generators preserve the stabilizer group, and the last set of generators is the cyclic shifts of the original: (1) Transform the given generators to either $X$-type or $Z$-type. This corresponds to appropriate reversible row operations on the binary symplectic representation of $C$ to obtain the block diagonal form,
\begin{equation}
\begin{pmatrix}
A &|& 0 \\
0 &|& B 
\end{pmatrix}.
\end{equation}
(2) Cyclic shifts of these new generators also generate the same stabilizer group. (3) Reversing the transformation in step (1) (now applied to the generators after step (2)) preserves the stabilizer group. The resulting generators are cyclic shifts of the original.

\textbf{Proof of \cref{Lemma:Lem3}}: First we will prove that $\mathcal{E}^z_{0,n}$ is distinguishable by $C$ iff for all $u_x\in\{2,3,\dots,n\}$, there exists $i\in\{1,\dots,r_x\}$ such that $x_{i,u_x}\oplus\dots\oplus x_{i,n}=1$.
Applying \cref{Lemma:Lem1}, we would like to prove that for all $p,q\in\{0,...,n-1\}$ such that $p>q$, there exists $i \in \{ 1, \cdots, r_x \}$ such that $x_{i,n-p+1}\oplus\dots\oplus x_{i,n-q}=1$  iff for all $u_x\in\{2,3,\dots,n\}$, there exist $i' \in \{1, \cdots , r_x \}$ such that $x_{i',u_x}\oplus\dots\oplus x_{i',n}=1$.

\noindent $(\Rightarrow)$ By choosing $q=0$ and $p=n-u+1$, the proof is trivial.

\noindent $(\Leftarrow)$ Assume by contradiction that there exists a pair of $p,q \in \{0,1,\dots,n{-}1\}$ with $p>q$ such that $x_{i,n-p+1}\oplus\dots\oplus x_{i,n-q}=0$ for all $i$; i.e., there exists a pair of errors $E_p=I^{\otimes n-p}\otimes Z^{\otimes p}$ and $E_q=I^{\otimes n-q}\otimes Z^{\otimes q}$ which cannot be distinguished by any generator of $C$. Let $C$ be generated by $\{g_1,\dots,g_r\}$. By \cref{Lemma:Cyclic_gen}, we can construct a new generator set $\{\tilde{g}_1,\dots,\tilde{g}_r\}$ of $C$ where $\tilde{g}_i = \mathcal{L}(g_i,q)$ for all $i$. Let the $X$-part of $\sigma(\tilde{g}_i)$ be  $(\tilde{x}_{i,1},\dots,\tilde{x}_{i,n})=(x_{i,q+1},\dots,x_{i,n},x_{i,1},\dots,x_{i,q})$. Note that $\tilde{x}_{n-p+1}=x_{i,n-(p-q)+1}$ and $\tilde{x}_{n-q}=x_{i,n}$. The assumption implies that $E_p$ and $E_q$ cannot be distinguished by any $\tilde{g}_i$ as well. This gives
\begin{align}
\tilde{x}_{i,n-p+1}\oplus\dots\oplus \tilde{x}_{i,n-q}&=0, 
\end{align}
or equivalently,
\begin{align}
x_{i,n-(p-q)+1}\oplus\dots\oplus x_{i,n}&=0.
\end{align}
Let $u_x = n-(p-q)+1$. Therefore, there exists $u_x \in \{2,3,\dots,n\}$ such that $x_{i,u_x}\oplus\dots\oplus x_{i,n}=0$ for all $i$.

The proof of statement for $\mathcal{E}^x_{0,n}$ is similar to the proof of statement for $\mathcal{E}^z_{0,n}$, while the proof of statement for $\mathcal{E}^P_{0,n}$ is similar to the proof of Statement 3 in \cref{Lemma:Lem1}.

\begin{table*}[htbp]
	\begin{center}
		\begin{tabular}{| c | c |}
			\hline
			Type of faults & Correction procedure\\
			\hhline{|=|=|}
			No fault & 1 \\
			\hline
			Qubit or measurement fault on $m_0$ & 1 or 2 \\
			\hline
			Qubit or measurement fault on $f_0$	& 1 or 4(a) \\
			\hline
			Qubit or measurement fault on $f_i$	& 1 or 3 \\
			\hline
			Red CNOT fault with $I$ or $X$ error on the target qubit & 2 \\
			\hline
			Red CNOT fault with $Y$ or $Z$ error on the target qubit & 4(b) \\
			\hline
			Blue or green CNOT fault with $I$ or $X$ error on the target qubit & 1 or 2 or 3 \\
			\hline
			Blue or green CNOT fault with $Y$ or $Z$ error on the target qubit & 4(b) or 4(c) \\
			\hline
			Orange CNOT fault with $I$ or $X$ error on the target qubit	& 1 or 2 or 4(a) \\
			\hline
			Orange CNOT fault with $Y$ or $Z$ error on the target qubit	& 2 or 4(a) \\		
			\hline
		\end{tabular}
		\caption{Possible faults during the flag-FTEC protocol in \cref{sec:Protocol} and their corresponding correction procedures. Here we assume that the number of faults $v_2$ is at most 1.}
		\label{tab:FTEC_fault}%
	\end{center}
\end{table*}
\begin{table*}[htbp]
	\begin{center}
		\begin{tabular}{| c | c |}
			\hline
			Type of faults & Correction procedure\\
			\hhline{|=|=|}
			\multicolumn{2}{|c|}{------ No fault during operator measurement ------} \\
			\hline
			No input error, no fault during syndrome measurement & 1(a) \\
			\hline
			No input error, one fault during syndrome measurement & 1(b) or 1(c) \\
			\hline
			weight-1 input error, no fault syndrome measurement & 1(c) \\
			\hline
			\multicolumn{2}{|c|}{------ One fault during operator measurement ------} \\
			\hline
			Qubit or measurement fault on $m_0$ & 2 \\
			\hline
			Qubit or measurement fault on $f_0$	& 4(a) \\
			\hline
			Qubit or measurement fault on $f_i$	& 3 \\
			\hline
			Red CNOT fault with $I$ or $X$ error on the target qubit & 2 \\
			\hline
			Red CNOT fault with $Y$ or $Z$ error on the target qubit & 4(b) \\
			\hline
			Blue or green CNOT fault with $I$ or $X$ error on the target qubit & 1(a) or 2 or 3 \\
			\hline
			Blue or green CNOT fault with $Y$ or $Z$ error on the target qubit & 4(b) or 4(c) \\
			\hline
			Orange CNOT fault with $I$ or $X$ error on the target qubit	& 1(a) or 2 or 4(a) \\
			\hline
			Orange CNOT fault with $Y$ or $Z$ error on the target qubit	& 2 or 4(a) \\		
			\hline
		\end{tabular}
		\caption{Possible faults during the flag-FT operator measurement protocol in \cref{sec:MeasProtocol} and their corresponding correction procedures. Here we assume that the number of input errors $v_1$ and the number of faults $v_2$ satisfy $v_1+v_2 \leq 1$.}
		\label{tab:FTM_fault}%
	\end{center}
\end{table*}

We already proved statements for $\mathcal{E}^z_{0,n}$, $\mathcal{E}^x_{0,n}$, and $\mathcal{E}^P_{0,n}$. We will generalize the statements to $\mathcal{E}^z_{l,n}$, $\mathcal{E}^x_{l,n}$, and $\mathcal{E}^P_{l,n}$ for any $l \in \{0,\dots,n-1\}$. Let $\tilde{C}$ by a cyclic CSS code generated by $\{\mathcal{L}(g^x_{1},l),\dots,\mathcal{L}(g^x_{r_x},l),\mathcal{L}(g^z_{1},l),\dots,\mathcal{L}(g^z_{r_z},l)\}$. Observe that by qubit reordering, $\mathcal{E}^P_{l,n}$ is distinguishable by $\tilde{C}$ iff $\mathcal{E}^P_{0,n}$ is distinguishable by $C$. Since $\tilde{C}$ and $C$ are the same code by \cref{Lemma:Cyclic_gen}, we have that $\mathcal{E}^P_{l,n}$ is distinguishable by $C$ for any $l \in \{0,\dots,n-1\}$ iff $\mathcal{E}^P_{0,n}$ is distinguishable by $C$. The proof is also applied to $\mathcal{E}^z_{l,n}$ and $\mathcal{E}^x_{l,n}$.

\section{Lists of possible faults during FTEC and FT operator measurement protocols}
\label{app:FaultTable}%
The lists of possible faults during the flag-FTEC protocol (see \cref{sec:Protocol}) and the flag-FT operator measurement protocol (see \cref{sec:MeasProtocol}) are given in \cref{tab:FTEC_fault,tab:FTM_fault}, respectively. The corresponding correction procedure for each type of faults refers to the step of either flag-FTEC protocol or flag-FT operator measurement protocol. Here we assume that $v_2 \leq 1$ in the flag-FTEC protocol and $v_1+v_2 \leq 1$ in the flag-FT operator measurement protocol, where $v_1$ is the number of input errors and $v_2$ is the number of faults.

\newpage

\end{document}